\newtheorem{algorithm}{Algorithm}
\newtheorem{comment}[theorem]{Comment}
\newcommand{\hide}[1]{}
\begin{document}
\title{Improved Algorithms for Exact and Approximate Boolean Matrix Decomposition\thanks{This work
was supported by the Ministry of Education of Japan under Scientific Research grants 25280121
and 26560134 awarded to Yuan Sun.} 
}
\titlerunning{Boolean Matrix Decomposition} 
\author{Yuan Sun       
	\and Shiwei Ye
	\and Yi Sun
	\and Tsunehiko Kameda
}

\authorrunning{Yuan Sun, et al.} 
\institute{Yuan Sun \at
              Information and Society Research Division, National Institute of Informatics, Tokyo, Japan\\
              \email{yuan@nii.ac.jp} 
           \and
           Shiwei Ye \at
              School of Electrical \& Communic. Eng., University of China Academy of Science, Beijing, China\\
              \email{shiwye@ucas.ac.cn}
           \and
           Yi Sun \at
              School of Electrical \& Communic. Eng., University of China Academy of Science, Beijing, China\\
              \email{sunyi@ucas.ac.cn}
           \and
           Tsunehiko Kameda \at
           School of Computing Science, Simon Fraser University, Vancouver, Canada  \\           
	\email{tiko@sfu.ca} 
}


\maketitle
\begin{abstract}
An arbitrary $m\times n$ Boolean matrix $M$ can be decomposed {\em exactly}
as $M =U\circ V$,
where $U$ (resp. $V$) is an $m\times k$ (resp. $k\times n$) Boolean matrix
and $\circ$ denotes the Boolean matrix multiplication operator.
We first prove an exact formula for the Boolean matrix $J$ such that $M =M\circ J^T$ holds,
where $J$ is maximal in the sense that if any 0 element in $J$ is changed to a 1
then this equality no longer holds.
Since minimizing $k$ is NP-hard,
we propose two heuristic algorithms for finding suboptimal but good decomposition. 
We measure the performance (in minimizing $k$) of our algorithms on several real datasets
in comparison with other representative heuristic algorithms for Boolean matrix decomposition (BMD).
The results on some popular benchmark datasets demonstrate that one of our proposed algorithms
performs as well or better on most of them.
Our algorithms have a number of other advantages: 
They are based on exact mathematical formula,
which can be interpreted intuitively.
They can be used for approximation as well with competitive ``coverage.''
Last but not least, they also run very fast.
Due to interpretability issues in data mining,
we impose the condition, called the ``column use condition,''
that the columns of the factor matrix $U$ must form a subset of the columns of $M$. 

In educational databases, the ``ideal item response matrix'' $R$, the ``knowledge state matrix'' $A$
and the ``Q-matrix'' $Q$ play important roles.
As they are related exactly by $\overline{R}=\overline{A}\circ Q^T$, given $R$,
we can find $A$ and $Q$ with a small number ($k$) of ``knowledge states,''  using our exact BMD heuristics.
\end{abstract}

\hide{
\category {D.2.8}{Metrics}{\it Performance measures} \category {D.4.8}{Performance} {\it Measurements} \category {F.2.1}{Numerical Algorithms and Problems}{\it Computations on matrices} \category {I.2.6}{Learning}{\it Concept learning,
Parameter learning}
\terms{Algorithms, Measurement, Performance, Theory}
\keywords{Boolean matrix decomposition,  Efficient algorithm, Heuristic} 
}

\section{Introduction}
Matrix decomposition, also called {\em matrix factorization},
has a long history and is an indispensable tool in Matrix algebra~\cite{golub1996}.
Many applications of matrix decomposition to data mining are described in a recent book
on massive data mining by Rajaraman et al.~\cite{rajaraman2014}.
The well-known {\em singular value decomposition} (SVD), for example,
is now a well-established technique,
and has been applied in diverse areas, ranging from statistics, image processing,
signal processing, and data analytics, to name a few.
Although SVD provides a powerful tool in many applications,
it suffers from the lack of interpretability in some applications~\cite{miettinen2008a}.
To address the interpretability issue,
researchers investigated {\em non-negative factorization} (NMF)~\cite{berry2007,lee1999,lee2001,vavasis2010}.
In applications such as digital image analysis, DNA analysis, and chemical spectral analysis,
for example,
it is required that the factor matrices have only non-negative elements.

To deal with categorical data in data mining, 
there have recently been intensive research activities in
{\em Boolean matrix decomposition} ({\em BMD}).
A good overview can be found in the Ph.D. thesis of Miettinen~\cite{miettinen2009},
who laid a ground work on BMD.
In connection with data mining,
BMD has attracted a great deal of research interest in recent years,
as evidenced by a large number of recent publications.
The seminal work by Miettinen et al.~\cite{miettinen2009,miettinen2008b} was a catalyst
to ignite a wave of interest in BMD and its applications to data mining,
for example,
\cite{barnes2010,belohlavek2013,belohlavek2010,koedinger2012,miettinen2008b,miettinen2012,xiang2011},
although there had been some related work prior to that,
e.g., \cite{monson1995,schmidt2011}, in combinatorics research.
BMD also has applications in such areas as educational testing~\cite{tatsuoka2009}
and access control~\cite{streich2009}, 
as well as in more traditional data analysis.

By $M\in \{0,1\}^{m\times n}$, we mean that $M$ is an $m\times n$ Boolean matrix.
BMD aims to find two matrices $U\in \{0,1\}^{m\times k}$ and $V\in \{0,1\}^{k\times n}$
such that the difference $||M-U\circ V||_L$ under some norm
$L$ is minimized with a given $k$ or as small a $k$ as possible.
The minimum possible $k$ is called the {\em Boolean rank} of $M$.
It is known that the Boolean rank of a binary matrix may be larger or smaller
than its real rank~\cite{gregory1983}.
Moreover, the rank of any real matrix can be computed efficiently by Gaussian elimination,
while finding the Boolean rank of a binary matrix is NP-hard~\cite{nau1978}.

In this paper,
we initially require that $||M-U\circ V||_L =0$ under any norm $L$,
namely we are interested in {\em exact} BMD.\footnote{Later in this paper we relax
the requirement of exact decomposition,
and also discuss approximation to BMD.}
Therefore, unless otherwise specified, $||M||$ (resp. $||\bm{v}||$) shall
denote the number of non-0 elements in $M$ (resp. vector $\bm{v}$),
i.e., the $l_0$ norm.
It is clear that this problem is equivalent to the covering of a bipartite graph with {\em bi-cliques},
as pointed out by Lubiw~\cite{lubiw1990}.
Unfortunately,  the minimum bi-clique covering of a bipartite graph,
hence BMD, is an NP-hard problem~\cite{orlin1977} even for the chordal
bipartite graphs~\cite{mueller1996}.
Therefore it is impractical to insist that we discover $U$ and $V$ with the minimum $k$,
especially when the size of $M$ is large.
For more information on bicliques,
the reader is referred to \cite{amilhastre1998,doherty1999,hochbaum1998}.
It is known that a minimum bi-clique cover can be found in polynomial time
for some subclasses of bipartite graphs~\cite{franzblau1984,lubiw1990,lubiw1991,mueller1996}.

Geerts et al.~\cite{geerts2004} formulate the problem as follows.
A {\em tile} consists of a set of 1's in a Boolean matrix that appear at every intersection
of a set of rows and a set of columns,
and the number of those 1's is called the {\em area} of the tile.
A tile is also called a {\em combinatorial rectangle} in a communications context~\cite{kushilevitz1996}. 
A set of tiles is called a {\em tiling}.
Geerts et al.~\cite{geerts2004} investigate several tiling problems cast in the context of databases.
We paraphrase some of them as problems of covering 1's in a given matrix $M$.
\begin{enumerate}
\item
{\em Minimum tiling}.
Find a tiling containing the smallest number of tiles that together cover all the 1's in $M$.
\item
{\em Maximum $k$-tiling}.
Find a tiling consisting of at most $k$ tiles covering the largest number of 1's in $M$.
\item
 {\em Large tile mining (LTM)}.
 Given a minimum threshold $\sigma$, find all tiles whose area is at least $\sigma$.
\end{enumerate}

Our main goal is to solve the minimum tiling problem above efficiently,
because it is directly related to BMD.
Geerts et al.'s main interest is in designing an algorithm for maximum $k$-tiling.
It can be used to solve minimum tiling problem.
In contrast, we directly attack minimum tiling in a limited search space,
as explained below.

We mentioned non-negative factorization (NMF) earlier in connection with the interpretability issue.
To address this issue from a different angle,
Drineas et al.~\cite{drineas2006,drineas2008} introduced CX- and CUR-decompositions.
In the CX-decomposition
a given matrix $M$ is decomposed into two matrices $C$ and $X$ such that
the ``difference'' between $M$ and $CX$ is minimized,
with the condition that the columns of $C$ must be a subset of the columns of $M$,
namely, the {\em column use condition} is imposed.
In the CUR-decomposition, on the other hand,
a given matrix $M$ is decomposed into three matrices $C, U$ and $R$,
with the condition that the columns of $C$ (resp. rows of $R$)
must be a subset of the columns (resp. rows) of $M$.
Miettinen applies CX- and CUR-decompositions to BMD
(where all the factor matrices, as well as $M$, are Boolean)
and proposes heuristic algorithms~\cite{miettinen2008a}.

To address the interpretability issue,
we also adopt the {\em column use condition}
that the set of columns of the factor matrix $U$ form a subset of the columns of $M$
 in our decomposition $M =U\circ V$.
Arguments in support of imposing this condition in some data mining applications
can also be found in~\cite{hyvonen2008}.
Note that in CX-decomposition,
a parameter $k$ is given
and it is required to find an optimal $C$ with $k$ columns
that minimizes the ``difference'' between $M$ and $CX$.
Therefore, the algorithms in \cite{miettinen2008a} cannot be used directly for our purpose,
which is to find $C$ and $X$ with the minimum $k$ such that
$CX$ {\em exactly} equals $M$.
In any case, imposing the column use condition has a beneficial effect of reducing the search space 
for an optimal BMD.
\hide{
Fault-tolerant patterns Besson \cite{besson2006}
Miettinen et al. \cite{miettinen2008b} show that {\em discrete basis problem} (DBP)
is NP-hard and it cannot be approximated unless P=NP.
Minimum number of tiles can be approximated within $O(\log nm)$~\cite{geerts2004}.
This requires an oracle that gives the largest tile.
}

\subsection{Main contributions of this paper}
We first derive a closed-form formula for $J$ satisfying $M=M\circ J^T$,
where $J$ is  ``maximal'' in the sense that if any 0 in $J$ is changed to a
1, then this equality is violated.
We then propose two heuristic algorithms for decomposing $M\in \{0,1\}^{m\times n}$
into $U\circ V$
such that $U\in \{0,1\}^{m\times k}$ satisfies the column use condition and its column dimension is minimized.
Matrix $J$ greatly facilitates finding a set of candidate tiles.

Two important performance criteria are (i) how close is the common dimension $k$
of the generated $U$ and $V$ to the (Schein) rank of $M$,
which is the minimum possible,
and (ii) how fast $U$ and $V$ can be computed.
We demonstrate that our algorithms do rather well in these aspects
in comparison with other known
algorithms without the column use
condition~\cite{belohlavek2013,belohlavek2010,geerts2004,xiang2011},
despite the fact that some of them are based on fairly ``sophisticated'' concepts.
Obviously, without the column use condition,
one should be able to achieve a smaller (not larger to be exact) $k$.
When the objective is {\em exact} BMD,
in spite of this restriction,
our algorithms do as well as or better than the others on four out of the five popular datasets
we have tested,\footnote{See
Table~I in Section~\ref{sec:performance}.
The rows labeled 100\% shows the data for exact decomposition.}
which we find somewhat surprising.

Our algorithms can also be used for ``from-below'' approximation\footnote{For any 1 element
in $U\circ V$, the corresponding element in $M$ must be a 1.}~\cite{belohlavek2013,belohlavek2010}
as well with competitive {\em coverage}
(i.e., the fraction of the 1's covered by the selected tiles).
Since matrix operations are available in popular mathematical software packages
such as {\tt Matlab, Maple}, and the {\tt R}-language,
we made special efforts to state our algorithms in matrix operations.
We believe that it has helped to enhance readability.

\subsection{Related work}
\hide{
}

It is clear that BMD is easily reducible to the {\em set cover} ({\em SC})  problem.
Feige~\cite{feige1998} shows that SC can be solved approximately
with the guaranteed approximation ratio of $O(\log n)$ in the worst case.
Umetani et al.~\cite{umetani2007} give a survey on SC algorithms,
but new heuristics are still being proposed, e.g., \cite{lan2007}.
Belohlavek et al.~\cite{belohlavek2010} comment that using a SC heuristic
(without any modification)
to solve BMD is not very effective.
In another context,
Miettinen also states that in practice algorithms without provable approximation
factors performed better~\cite{miettinen2008a}.

We now review the known heuristic algorithms for BMD,
which are closely related to our work reported in this paper.
Geerts et al.~\cite{geerts2004} concentrate on  `maximum $k$-tiling'
and `large tile mining' mentioned before.
Their algorithm, which we call {\tt Tiling}, 
uses the well known greedy SC heuristic to iteratively find tiles
that cover the most uncovered 1's in the given matrix $M$.
Unfortunately, it cannot be used for exact BMD.

Miettinen et al. designed an algorithm, named  {\tt Asso},
to solve the {\em discrete basis problem}~\cite{miettinen2008b}.
As such it does not find tiling, and does not exclude tiles
which may cover 0's in the matrix.
Therefore, in general, it is not suitable for finding exact BMD,
which is the main topic of this paper.
\hide{
We feel that the weight threshold hinders the optimization,
since the minimizing the number of tiles, each 1 in $M$ should have equal weight.

From \cite{belohlavek2013}:
 {\tt Asso} first computes an $m\times m$ Boolean matrix $A$ in which $A_{ij} = 1$
if the confidence of rule $\{i\} \Rightarrow \{j\}$ exceeds a parameter $\tau$.
The rows of $A$ are then used as candidates for the rows of the factor-attribute (i.e., basis vector) matrix.
The actual rows are selected using a greedy approach using function {\tt cover} that rewards
with weight $w^+$ the decrease of error $E_u$ and penalizes with weight $w^-$
 the increase of $E_o$ that is due to a given row of $A$. 
{\tt Asso} commits both types of errors, $E_u$ and $E_o$.
It thus provides general factorizations which need not be from-below factorizations of $I$.
{\tt Asso} is not meant for producing exact BMD with the minimum dimension.
}

Work by Belohlavek et al.~\cite{belohlavek2013,belohlavek2010} addresses exact
as well as approximate BMD.
They make use of  lattice theoretic concepts and ideas from {\em formal concept analysis},
and propose two heuristic algorithms, named {\tt GreConD} and {\tt GreEss},
which find good ``from-below'' approximation as well as exact BMD.
They do not impose the column use condition.
In \cite{belohlavek2013},
they compare the performance of their algorithms with other known algorithms.

Another group of researchers, Xiang et al., worked on the ``summarization'' of a database~\cite{xiang2011}.
Essentially, they also try to find a tiling that covers all 1's in a given transactional database,
which can be represented by a Boolean matrix.
However, the objective function that they want to minimize is not the number of tiles in the tiling,
but the total size of the ``description length,''
where the ``description length'' of a tile is defined as the sum of the number of 1's
in a row of the tile and the number of 1's in a column of the tile.
They propose a heuristic algorithm, named {\tt Hyper}, to minimize this objective function,
and claim that it also tends to minimize the number of tiles,
which is the dimension $k$ in our model.

\subsection{Paper organization}
The rest of the paper is organized as follows.
Section~\ref{sec:prelim} gives some basic definitions which will be used throughout the paper,
and reviews a minimal set of Boolean algebra facts needed to understand this paper.
Section~\ref{sec:BMD} is devoted to the proofs of our major mathematical results,
which form the theoretical basis for the algorithms proposed in Section~\ref{sec:algorithms}.
We propose two new algorithms for decomposing a given $M$ into the unknown $U$ and $V$,
and illustrate them with a simple example.
Section~\ref{sec:performance} presents some experimental results,
which are very encouraging.
In Section~\ref{sec:application}, as an example of possible practical applications,
we show how to apply our algorithms to educational data mining.
Section~\ref{sec:conclusion}, concludes the paper with some discussions.

\section{Preliminaries}\label{sec:prelim}
In this section the basic notations and definitions used throughout this paper are given.
We also cite some basic formulae of Boolean matrix theory.
Some standard terms in matrix theory are used without definition
since they are readily available, for example,
in books by Golub and Van Loan~\cite{golub1996} and Kim~\cite{kim1982}.

\subsection{Notations and Definitions}
Let $M= [\mu_{ij}] \in\{0,1\}^{m\times n}$.
Although there is no intrinsic size or magnitude attribute in the value
0 ({\tt False}) and 1 ({\tt True}),
we assume that the ``larger than'' ($>$) relation $1> 0$ holds
and $1-0=1, 1-1=0-0=0$.
In an expanded form, it is represented as
\begin{eqnarray}\label{eqn:matrixM}
M=\left(
\begin{array}{l}
    \bm{\mu}_1 \\
    \bm{\mu}_2 \\
    . \\
    . \\
    . \\
    \bm{\mu}_m \\
\end{array}
\right)=\left(
\begin{array}{llllll}
    \mu_{11} & \mu_{12} &... &\mu_{1n}\\
    \mu_{21} & \mu_{22} &... &\mu_{2n}\\
    . & . &. &.\\
    . & . &\ .\  &.\\
    . & . &\ \ . &.\\
    \mu_{m1} & \mu_{m2} &... &\mu_{mn}\\
\end{array}
\right)
\end{eqnarray}
where 
$
\bm{\mu}_i=[\mu_{i1},\mu_{i2},...,\mu_{in}]
$
is called the $i^{th}$ {\em row vector},
and
$
[\mu_{1j},\mu_{2j},...,\mu_{mj}]^T
$
is called the $j^{th}$ {\em column vector} of $M$.
We also often use $M[i,:]$ (resp. $M[:,j]$) to denote the $i$-th row
 (resp. $j$-th column) vector of $M$.
The matrix whose $(i,j)$ elements is  $\overline{\mu}_{ij}$,
where $\overline{0}=1$ and $\overline{1}=0$,
 is called the {\em complement} of $M$
and is denoted by $\overline{M}$.
The matrix whose $(i,j)$ elements is $\mu_{ji}$ is called the {\em transpose} of $M$,
and is denoted by $M^T$.
The $n\times n$ identity matrix is denoted by $I_{n\times n}$,
and $[0]_{m\times n}$ shall denote an ${m\times n}$ matrix whose elements are all 0's.
Let $\mathbb{R}$ (resp. $\mathbb{N}$) denote the set of all real numbers 
(resp. natural numbers, including 0). 

\begin{definition} 
Let $\bm{p}= [p_1, p_2,\ldots, p_n] \in\{0,1\}^{1\times n}$ and $\bm{q}= [q_1, q_2,\ldots,q_n] \in\{0,1\}^{1\times n}$.
We say the $\bm{p}$ {\em dominates} $\bm{q}$ if  $p_i\geq q_i$ for all $i=1,\ldots, n$,
and write $\bm{p} \geq \bm{q}$.
We write $\bm{p} > \bm{q}$ if $\bm{p} \geq \bm{q}$ and $p_{i} > q_{i}$ for some $i~(1\leq i\leq n)$,
and say that $\bm{p}$ {\em strictly dominates} $\bm{q}$.
Dominance relation is similarly defined for a pair of column vectors.
\end{definition}

\begin{definition} 
We define a partial order ``$\leq$'' on a pair of binary matrices
$P= [p_{ij}] \in\{0,1\}^{m\times n}$ and $Q= [q_{ij}] \in\{0,1\}^{m\times n}$.
We write $P\leq Q$, if $p_{ij}\leq q_{ij}$, for all $ i=1,2,\ldots, m$ and $j=1,2,\ldots,n$.
\end{definition}

\begin{definition} 
Let $P= [p_{ij}] \in\{0,1\}^{m\times n}$ and $Q= [q_{ij}] \in\{0,1\}^{m\times n}$
such that $P\leq Q$. 
We say the $P$ {\em covers} the set of 1 entries in $Q$ at $\{(i,j) \mid p_{ij} =1\}$.
\end{definition}

\begin{definition} 
If $U= [u_{ij}] \in\{0,1\}^{m\times n}$ and $V= [v_{ij}] \in\{0,1\}^{m\times n}$,
the {\em element-wise Boolean sum} of $U$ and $V$ is defined by
\[
U\vee V=[u_{ij}\vee v_{ij}] \in\{0,1\}^{m\times n},
\]
and {\em element-wise Boolean product} of $U$ and $V$ is defined by
\[
U\wedge V=[u_{ij}\wedge v_{ij}] \in\{0,1\}^{m\times n},
\]
where
$0\vee 0 = 0$, $1\vee 0 = 0\vee 1=1\vee 1 = 1$,
$0\wedge 0 =  1\wedge 0 = 0\wedge 1=0$, and $1\wedge 1 = 1$.
\end{definition}

For $U= [u_{ij}] \in\{0,1\}^{m\times k}$ and  $V= [v_{ij}] \in\{0,1\}^{k\times n}$,
 their ordinary {\em arithmetic product} is defined by
\begin{equation}
P= UV =[p_{ij}]\in \mathbb{R}^{m\times n}, ~p_{ij}=\sum_{t=1}^k u_{it}v_{tj}. \label{eqn:G}
\end{equation}
Their {\em Boolean product} is defined by
\begin{equation}
B= U\circ V = [b_{ij}]\in\{0,1\}^{m\times n},  ~b_{ij}=\vee_{t=1}^k (u_{it} \wedge v_{tj}). \label{eqn:H}
\end{equation}

In a Boolean product, 1's and 0's are considered as Boolean values,
while in an arithmetic product, they are treated as integers.
Let $M$ be given by (\ref{eqn:matrixM}) and $c$ be a constant.
The matrix whose $(i,j)$ element is $c\mu_{ij}$ is called a {\em scaler multiple} of $M$
and is denoted by $c \cdot M$. 

\subsection{Brief review of matrix algebra relevant to this paper}
The materials in this subsection, 
except Lemma~\ref{lem:prod2}, can be found in {\rm \cite{golub1996,kim1982}}.
\begin{proposition} Associativity.
\begin{enumerate}
\item[(a)]
$(UV)W=U(VW)$
\item[(b)]
$(U\circ V)\circ W =U\circ (V\circ W)$.
\end{enumerate}
\end{proposition}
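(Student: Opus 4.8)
The plan is to verify each identity entry by entry, exploiting the fact that both structures at play---the reals $(\mathbb{R},+,\times)$ in part (a) and the two-element Boolean algebra $(\{0,1\},\vee,\wedge)$ in part (b)---satisfy exactly the associativity, commutativity and distributivity laws that justify reordering a finite double summation. Throughout, assume the matrices are conformable, say $U\in\{0,1\}^{m\times k}$, $V\in\{0,1\}^{k\times \ell}$ and $W\in\{0,1\}^{\ell\times n}$ (and analogously with real entries for part (a)).

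For part (a), applying (\ref{eqn:G}) twice shows that the $(i,j)$ entry of $(UV)W$ is $\sum_{s=1}^{\ell}\bigl(\sum_{t=1}^{k} u_{it}v_{ts}\bigr)w_{sj}$, whereas the $(i,j)$ entry of $U(VW)$ is $\sum_{t=1}^{k} u_{it}\bigl(\sum_{s=1}^{\ell} v_{ts}w_{sj}\bigr)$. Distributing each outer factor over the inner sum and using associativity of real multiplication on the triple products, both expressions collapse to
\[
\sum_{t=1}^{k}\sum_{s=1}^{\ell} u_{it}\,v_{ts}\,w_{sj},
\]
and the two iterated sums agree because a finite sum of reals is independent of the summation order. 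Hence $(UV)W$ and $U(VW)$ coincide in every entry.

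For part (b), the identical computation goes through with $+$ replaced by $\vee$ and $\times$ by $\wedge$: by (\ref{eqn:H}), the $(i,j)$ entry of $(U\circ V)\circ W$ is $\bigvee_{s=1}^{\ell}\bigl(\bigl(\bigvee_{t=1}^{k} u_{it}\wedge v_{ts}\bigr)\wedge w_{sj}\bigr)$ and that of $U\circ(V\circ W)$ is $\bigvee_{t=1}^{k}\bigl(u_{it}\wedge\bigl(\bigvee_{s=1}^{\ell} v_{ts}\wedge w_{sj}\bigr)\bigr)$. Since $\wedge$ distributes over $\vee$ and $\wedge$ is associative, each side reduces to $\bigvee_{t=1}^{k}\bigvee_{s=1}^{\ell}(u_{it}\wedge v_{ts}\wedge w_{sj})$, and since $\vee$ is associative and commutative this double disjunction is the same regardless of the nesting order; so $(U\circ V)\circ W = U\circ(V\circ W)$.

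There is no real obstacle: the substance of the proposition is precisely that $(\{0,1\},\vee,\wedge)$ is a commutative semiring, so parts (a) and (b) are one and the same argument instantiated over two semirings. The only points needing a little care are the dimension bookkeeping---the index $k$ is eliminated by the first multiplication and $\ell$ by the second---and, for part (b), checking from the truth tables for $\vee$ and $\wedge$ that $a\wedge(b\vee c)=(a\wedge b)\vee(a\wedge c)$ together with associativity of $\wedge$ and of $\vee$ indeed hold on $\{0,1\}$.
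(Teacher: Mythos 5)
Your proof is correct. The paper itself does not prove this proposition---it defers to the standard references (Golub--Van Loan and Kim)---and your entrywise verification via the commutative-semiring structure of $(\mathbb{R},+,\times)$ and $(\{0,1\},\vee,\wedge)$ is precisely the standard argument those references would give, so there is nothing to add or correct.
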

We can thus write $UVW$ (resp. $U\circ V\circ W$) for (a) (resp. (b))
without ambiguity.
\begin{proposition}\label{prop:transpose} Transpose of product.
\begin{enumerate}
\item[(a)]
For $U \in\{0,1\}^{m\times k}$ and  $V\in\{0,1\}^{k\times n}$,
$(U\circ V)^T = V^T \circ  U^T$ holds.
\item[(b)]
For $U\in \mathbb{R}^{m\times k}$ and $V\in \mathbb{R}^{k\times n}$,
$(U V)^T = V^T   U^T$ holds.
\end{enumerate}\qed
\end{proposition}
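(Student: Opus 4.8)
The plan is to establish both identities by computing an arbitrary entry of each side and reducing the equality to the commutativity of the underlying scalar operation --- Boolean $\wedge$ in part~(a), ordinary multiplication in part~(b). First I would observe that the dimensions agree on both sides: in~(a) the matrices $(U\circ V)^T$ and $V^T\circ U^T$ are each $n\times m$ Boolean matrices, and in~(b) the matrices $(UV)^T$ and $V^TU^T$ are each $n\times m$ real matrices, so it suffices to compare the entries indexed by $i$ with $1\le i\le n$ and $j$ with $1\le j\le m$.

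For part~(a), fix such $i$ and $j$. By the definition of the transpose and formula~(\ref{eqn:H}),
\[
\bigl[(U\circ V)^T\bigr]_{ij}=[U\circ V]_{ji}=\bigvee_{t=1}^{k}\bigl(u_{jt}\wedge v_{ti}\bigr).
\]
On the other hand, writing $[V^T]_{it}=v_{ti}$ and $[U^T]_{tj}=u_{jt}$ and applying~(\ref{eqn:H}) to the product $V^T\circ U^T$,
\[
\bigl[V^T\circ U^T\bigr]_{ij}=\bigvee_{t=1}^{k}\bigl([V^T]_{it}\wedge [U^T]_{tj}\bigr)=\bigvee_{t=1}^{k}\bigl(v_{ti}\wedge u_{jt}\bigr).
\]
Since $\wedge$ is commutative, the two Boolean sums coincide term by term, hence are equal; as $i$ and $j$ were arbitrary, the matrices are equal. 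Part~(b) goes through verbatim with $\bigvee$ replaced by $\sum$, $\wedge$ by ordinary real multiplication, and~(\ref{eqn:H}) replaced by~(\ref{eqn:G}), now invoking the commutativity of multiplication in $\mathbb{R}$ (and that the summands may be reordered).

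There is essentially no hard step here: the proposition is a routine index manipulation, and the only points requiring a little care are keeping the index ranges straight --- so that the right-hand product $V^T\circ U^T$ (resp.\ $V^TU^T$) is well-formed and has the claimed shape --- and not conflating the Boolean and arithmetic settings, since the paper uses the same notational layout for both. Accordingly, I would write out~(a) in full and then remark that~(b) follows by the identical argument carried out over $\mathbb{R}$.
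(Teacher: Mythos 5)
Your entrywise computation is correct: both sides of (a) are $n\times m$, and reducing $[(U\circ V)^T]_{ij}$ and $[V^T\circ U^T]_{ij}$ to $\bigvee_{t=1}^{k}(u_{jt}\wedge v_{ti})$ via commutativity of $\wedge$ (and the analogous sum over $\mathbb{R}$ for (b)) is exactly the standard argument. The paper itself gives no proof of this proposition --- it is stated with a \qed and attributed to the references \cite{golub1996,kim1982} --- so your write-up simply supplies the routine textbook verification that the paper omits.
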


\begin{proposition} \label{prop:expand} Product expansion.
\begin{eqnarray}\label{eqn:expand}
M &=&U\circ V =U[:,1]\circ V[1,:]\vee U[:, 2]\circ V[2,:]\vee \ldots \nonumber\\
 &&\vee U[:,k]\circ V[k,:] \nonumber\\
\ &=&\vee_{t=1}^k \{U[:, t]\circ V[t,: ]\} 
\end{eqnarray}
\end{proposition}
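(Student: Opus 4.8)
The plan is to establish the identity entrywise, reducing everything to the definition of the Boolean product in~(\ref{eqn:H}) together with the associativity and commutativity of the Boolean sum $\vee$. First I would fix an arbitrary pair of indices $i\in\{1,\dots,m\}$ and $j\in\{1,\dots,n\}$ and write down the $(i,j)$ entry of the left-hand side: by~(\ref{eqn:H}), $(U\circ V)_{ij}=\vee_{t=1}^{k}(u_{it}\wedge v_{tj})$.

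Next I would analyze a single term $U[:,t]\circ V[t,:]$ appearing on the right-hand side. Here $U[:,t]$ is an $m\times 1$ Boolean matrix and $V[t,:]$ is a $1\times n$ Boolean matrix, so their Boolean product is a well-defined $m\times n$ Boolean matrix; applying~(\ref{eqn:H}) with inner dimension equal to $1$, its $(i,j)$ entry is simply $u_{it}\wedge v_{tj}$, with no Boolean summation involved. Then, by the definition of the element-wise Boolean sum, the $(i,j)$ entry of $\vee_{t=1}^{k}\{U[:,t]\circ V[t,:]\}$ equals $\vee_{t=1}^{k}(u_{it}\wedge v_{tj})$.

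Comparing the two expressions, the $(i,j)$ entries of the two sides agree for every choice of $i$ and $j$, which proves the identity. The only point that deserves a word of care is the claim that the Boolean ``outer product'' $U[:,t]\circ V[t,:]$ has $(i,j)$ entry $u_{it}\wedge v_{tj}$, but this is exactly~(\ref{eqn:H}) specialized to the case of a single common index, so there is no genuine obstacle: the proposition is in essence a bookkeeping rearrangement of the defining sum $b_{ij}=\vee_{t=1}^{k}(u_{it}\wedge v_{tj})$, grouping its terms according to the index $t$ and recognizing each group as one rank-one Boolean matrix.
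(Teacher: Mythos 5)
Your proof is correct: the entrywise verification via the definition~(\ref{eqn:H}), with the observation that each outer product $U[:,t]\circ V[t,:]$ has $(i,j)$ entry $u_{it}\wedge v_{tj}$, is exactly the standard argument. The paper states this proposition without proof, citing it as a known fact from the matrix-algebra literature, so there is nothing to compare against; your bookkeeping argument is the canonical one and fills that gap adequately.
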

The following proposition follows directly from (\ref{eqn:H}).
\begin{proposition}\label{prop:prod1} 
Let $\bm{p}= \bm{\left[}p_1~p_2 \ldots p_m\bm{\right]}$ and
$\bm{q}= $ $\bm{\left[}q_1~q_2\ldots q_n\bm{\right]}$
be two Boolean row vectors.
We have
\hide{
}

\begin{eqnarray}\label{eqn:colrowproduct}
\bm{p}^T\circ \bm{q} &=& \bm{\left[}q_1\cdot\bm{p}^T~~q_2\cdot\bm{p}^T\ldots q_n\cdot\bm{p}^T\bm{\right]}\\
&=&\left(\begin{array}{l}
    p_1 \cdot\bm{q}\\
    p_2 \cdot\bm{q}\\
    ~~ . \\
    ~~. \\
    ~~. \\
    p_m \cdot\bm{q}\\
\end{array}
\right)  \in \{0,1\}^{m\times n}.
\end{eqnarray}
\end{proposition}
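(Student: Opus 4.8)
The plan is to obtain the claim by expanding the Boolean product directly from its definition in~(\ref{eqn:H}) and then re-grouping the resulting entries, first by columns and then by rows.

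First I would specialize~(\ref{eqn:H}) to the case of a single inner dimension. Regarding $\bm{p}^T$ as an $m\times 1$ matrix and $\bm{q}$ as a $1\times n$ matrix, formula~(\ref{eqn:H}) with $k=1$ says that the $(i,j)$ entry of $\bm{p}^T\circ\bm{q}$ equals $\vee_{t=1}^{1}(p_i\wedge q_j)=p_i\wedge q_j$. Since $p_i,q_j\in\{0,1\}$, the Boolean conjunction $p_i\wedge q_j$ coincides with the ordinary arithmetic product $p_iq_j\in\{0,1\}$; in particular every entry lies in $\{0,1\}$, which already establishes the membership assertion $\bm{p}^T\circ\bm{q}\in\{0,1\}^{m\times n}$.

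Next I would read off the two displayed forms from this entrywise description. Fixing a column index $j$, the $j$-th column of $\bm{p}^T\circ\bm{q}$ has entries $p_1q_j,p_2q_j,\ldots,p_mq_j$, which is precisely the scalar multiple $q_j\cdot\bm{p}^T$ in the notation of Section~\ref{sec:prelim}; collecting these columns over $j=1,\ldots,n$ gives the first expression. Symmetrically, fixing a row index $i$, the $i$-th row has entries $p_iq_1,p_iq_2,\ldots,p_iq_n$, i.e.\ $p_i\cdot\bm{q}$, and stacking these rows over $i=1,\ldots,m$ gives the second expression.

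There is essentially no real obstacle here: the statement is an immediate unfolding of the definition of the Boolean product. The only point worth stating carefully is the identification of $\wedge$ on $\{0,1\}$ with arithmetic multiplication, since that is exactly what permits pulling the scalar $q_j$ (respectively $p_i$) out front and expressing each column (respectively row) via the scalar-multiple notation $c\cdot M$.
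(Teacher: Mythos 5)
Your proof is correct and matches the paper's treatment: the paper simply asserts that this proposition ``follows directly from (\ref{eqn:H})'', and your expansion of the $(i,j)$ entry as $p_i\wedge q_j=p_iq_j$, followed by reading off the columns as $q_j\cdot\bm{p}^T$ and the rows as $p_i\cdot\bm{q}$, is exactly that direct unfolding.
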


For example, if $\bm{p}= \bm{\left[}0~1~0~1~0~1\bm{\right]}$
and $\bm{q}= \bm{\left[}0~1~0~1~1\bm{\right]}$,
then
\begin{eqnarray} \label{eqn:tile}
\bm{p}^T\circ \bm{q} =\left(
\begin{array}{llllll|}
    . & . &. &. &.\\
    . &1 & . &1&1\\
    . & . &. &. &.\\
    . &1 & . &1&1\\
    . & . &. &. &.\\
    . &1 & . &1&1\\
\end{array}
\right)
\end{eqnarray}
Thus $\bm{p}^T\circ \bm{q}$ represents a tile.
We identify $\bm{p}^T\circ \bm{q}$ with the tile it represents, and sometimes call
this expression itself a tile.
The formula in the following lemma will be used to simplify our algorithms later.
\begin{lemma}\label{lem:prod2}
Let $\bm{p}= \bm{\left[}p_1~p_2 \ldots p_m\bm{\right]}$ and
$\bm{q}= $ $\bm{\left[}q_1~q_2\ldots q_n\bm{\right]}$
be two Boolean row vectors,
and let $C\in \{0,1\}^{n\times m}$.
Then the following equality holds.
\begin{equation}\label{eqn:covered}
||C \wedge (\bm{p}^T\circ \bm{q})|| = \bm{q} C \bm{p}^T.
\end{equation}
\end{lemma}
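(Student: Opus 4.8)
The plan is to translate the Boolean, counting quantity on the left of \eqref{eqn:covered} into an ordinary arithmetic double sum, and then to recognize the right-hand side $\bm qC\bm p^T$ as precisely that same sum. The key observation is that on $\{0,1\}$ the Boolean conjunction $\wedge$ coincides with integer multiplication, and that for any Boolean matrix $B=[b_{ij}]$ the quantity $||B||$ --- the number of non-$0$ entries --- equals $\sum_{i,j}b_{ij}$ when the $b_{ij}$ are read as the integers $0$ and $1$. Once these two facts are in hand, the lemma is a one-line computation.

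First I would compute the $(i,j)$ entry of the tile $\bm p^T\circ\bm q$. By Proposition~\ref{prop:prod1} (equivalently, directly from the definition \eqref{eqn:H}) it equals $p_i\wedge q_j$, and since $p_i,q_j\in\{0,1\}$ this is the integer product $p_iq_j$. Taking the element-wise Boolean product with $C$ then gives entry $c_{ij}\wedge p_i\wedge q_j=c_{ij}\,p_i\,q_j$, again an integer product of $0/1$ values. The resulting matrix is Boolean, so its $l_0$ norm is the sum of its entries, i.e. $||C\wedge(\bm p^T\circ\bm q)||=\sum_i\sum_j c_{ij}\,p_i\,q_j$. It remains only to expand the right-hand side of \eqref{eqn:covered}: carrying out the two matrix--vector multiplications in $\bm qC\bm p^T$ produces exactly the same double sum $\sum_i\sum_j q_i\,c_{ij}\,p_j$. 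Equating the two expressions finishes the argument.

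I do not expect a genuine obstacle: the statement is a routine identity whose only subtlety is keeping the Boolean and arithmetic readings straight. The step that really uses the $0/1$ restriction is the passage from a Boolean sum to an arithmetic one --- it is legitimate precisely because each nonzero summand $c_{ij}p_iq_j$ contributes a $1$ in a distinct position, so the Boolean $\vee$ implicit in \eqref{eqn:H} is never forced to combine two $1$'s and no ``carry'' is lost. The other thing to watch is bookkeeping: since $\bm p^T\circ\bm q$ pairs $\bm p$ with rows and $\bm q$ with columns while $C\in\{0,1\}^{n\times m}$, one should fix a consistent indexing convention for $C$ so that the orientation matches on both sides of the claimed equality.
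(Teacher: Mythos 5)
Your proof is correct and takes essentially the same route as the paper's: both arguments show that each side of \eqref{eqn:covered} counts the positions $(i,j)$ at which $C$, $\bm{p}$, and $\bm{q}$ simultaneously carry a $1$, the paper doing so verbally and you doing so via the explicit double sum $\sum_{i,j}c_{ij}p_iq_j=\sum_{i,j}q_ic_{ij}p_j$. Your closing remark about fixing an indexing convention for $C$ is well taken --- the statement as printed pairs an $n\times m$ matrix with an $m\times n$ tile under $\wedge$, a mismatch the paper's own proof silently transposes away --- so your version is, if anything, the more careful one.
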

\begin{proof}
The quantity $||C \wedge (\bm{p}^T\circ \bm{q})||$ is clearly the number of 1 elements of $C$ such that
the corresponding element of $\bm{p}^T\circ \bm{q}$ is also a 1.
By Proposition~\ref{prop:prod1}, the $(i,j)$ element of $\bm{p}^T\circ \bm{q}$ is a 1
if $p_i=q_j=1$, and a 0 otherwise.
Note that $C \bm{p}^T\in \mathbb{N}^{n\times 1}$ on the right hand side of (\ref{eqn:covered})
is a column vector such that its $i^{th}$ element is the number of 1's in the $i^{th}$ row of $C$,
which are counted if it is in column $j$ satisfying $C[i,j]=p_j=1$.
Now $\bm{q}(C \bm{p}^T)$ ``picks up'' the $i^{th}$ element of $C \bm{p}^T$ provided $q_i=1$
and adds the picked up numbers.
\qed
\end{proof}

\section{BMD Theorems}\label{sec:BMD}

In the rest of this paper, we refer to matrix $U\in \{0,1\}^{m\times k}$ defined by
\begin{eqnarray}
U=\left(
\begin{array}{l}
    \bm{u}_1 \\
    \bm{u}_2 \\
    . \\
    . \\
    . \\
    \bm{u}_m \\
\end{array}
\right)=\left(
\begin{array}{llllll}
    u_{11} & u_{12} &... &u_{1k}\\
    u_{21} & u_{22} &... &u_{2k}\\
    . & . &. &.\\
    . & . &\ .\  &.\\
    . & . &\ \ . &.\\
    u_{m1} & u_{m2} &... &u_{mk}\\
\end{array} \label{eqn:matrixU}
\right)
\end{eqnarray}
and matrix $V\in \{0,1\}^{k\times n}$ defined by
\begin{eqnarray}
V=\left(
\begin{array}{l}
    \bm{v}_1 \\
    \bm{v}_2 \\
    . \\
    . \\
    . \\
    \bm{v}_k \\
\end{array}  \label{eqn:matrixV}
\right)=\left(
\begin{array}{llllll}
    v_{11} & v_{12} &... &v_{1n}\\
    v_{21} & v_{22} &... &v_{2n}\\
    . & . &. &.\\
    . & . &\ .\  &.\\
    . & . &\ \ . &.\\
    v_{k1} & v_{k2} &... &v_{kn}\\
\end{array}
\right)
\end{eqnarray}

The following lemma follows easily from the fact that $1\vee 1 = 1$.
\begin{lemma} 
Define matrices $G= [g_{ij}]=UV \in \mathbb{N}^{m\times n}$ and
$H=[h_{ij}]=U\circ V^T\in \{0,1\}^{m\times n}$.
Then for $i=1,2,\ldots,m$ and $j=1,2,\ldots,n$  we have
\begin{eqnarray}
g_{ij}=0 &\Leftrightarrow& h_{ij}=0 \nonumber\\
g_{ij} \geq 1&\Leftrightarrow& h_{ij}=1.
\end{eqnarray}
\end{lemma}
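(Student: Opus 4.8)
The plan is to prove the lemma by a direct entrywise comparison of the two defining formulas~(\ref{eqn:G}) and~(\ref{eqn:H}). Fix indices $i\in\{1,\dots,m\}$ and $j\in\{1,\dots,n\}$, and let $x_1,\dots,x_k\in\{0,1\}$ denote the $k$ binary products (of one entry of $U$ and one entry of the second factor) that occur in the definition of the $(i,j)$-entry. Then~(\ref{eqn:G}) says $g_{ij}=\sum_{t=1}^{k}x_t$, an ordinary arithmetic sum of the non-negative integers $x_t$, while~(\ref{eqn:H}) says $h_{ij}=\bigvee_{t=1}^{k}x_t$, the Boolean disjunction of the very same bits. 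A finite sum of non-negative integers is $0$ exactly when every summand is $0$, and a finite disjunction of bits is $0$ exactly when every disjunct is $0$; since these two conditions are both literally ``$x_t=0$ for all $t$,'' we get $g_{ij}=0 \Leftrightarrow h_{ij}=0$. This is the first equivalence, and it is just the intuitive remark preceding the lemma: several $1$-contributions accumulate in the arithmetic product but collapse to a single $1$ in the Boolean product, so the two products have identical zero patterns.

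The second equivalence then requires no further work. Since $g_{ij}\in\mathbb{N}$, the statement ``$g_{ij}\geq 1$'' is precisely the negation of ``$g_{ij}=0$,'' and since $h_{ij}\in\{0,1\}$, the statement ``$h_{ij}=1$'' is precisely the negation of ``$h_{ij}=0$''; taking complements of both sides of the first equivalence therefore yields $g_{ij}\geq 1 \Leftrightarrow h_{ij}=1$. As $i,j$ were arbitrary, this establishes the lemma.

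There is essentially no obstacle here; the one point worth stating explicitly is that the non-negativity of the summands $x_t$ is exactly what makes ``sum $=0$ iff all terms $=0$'' valid (it is the arithmetic counterpart of $x_1\vee\cdots\vee x_k=0\Leftrightarrow x_1=\cdots=x_k=0$, and would fail for general integers). If a more ``matrix-level'' phrasing is desired, the same fact can be read off from the product expansion of Proposition~\ref{prop:expand} together with Proposition~\ref{prop:prod1}, but the entrywise argument above is the shortest route and matches the ``$1\vee 1=1$'' hint given just before the statement.
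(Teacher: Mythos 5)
Your proof is correct and matches the paper's intent exactly: the paper offers no written proof, merely remarking that the lemma ``follows easily from the fact that $1\vee 1=1$,'' which is precisely your entrywise observation that the arithmetic sum and the Boolean disjunction of the same bits $u_{it}v_{tj}$ have identical zero patterns. You also handled sensibly the paper's notational slip between $UV$ and $U\circ V^{T}$ (the summands/disjuncts must be taken over the same second factor for the comparison to be literal), so nothing further is needed.
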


The following proposition follows easily from definition.
\begin{proposition}\label{prop:dominates}
Let $\bm{p}, \bm{q} \in\{0,1\}^{1\times a}$ be two Boolean row vectors.
Then ``$\bm{p}$ dominates $\bm{q}$'' can be expressed as
\begin{eqnarray} \label{lem:domination}
\bm{p} \geq \bm{q} \Leftrightarrow  \overline{\bm{p}}\circ\bm{q}^T = \bm{q} \circ\overline{\bm{p}}^T=0
 \Leftrightarrow \overline{\overline{\bm{p}} \circ\bm{q}^T}= \overline{\bm{q} \circ\overline{\bm{p}}^T}= 1.  \label{eqn:inclusion}
\end{eqnarray}
\end{proposition}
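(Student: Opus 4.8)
The plan is to unwind the Boolean matrix products appearing on both sides into their scalar ($1\times 1$) form using the definition in~(\ref{eqn:H}), and then reduce the claim to an entrywise statement about the pair $(p_i,q_i)\in\{0,1\}^2$.

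First I would note that $\overline{\bm{p}}\in\{0,1\}^{1\times a}$ and $\bm{q}^T\in\{0,1\}^{a\times 1}$, so by~(\ref{eqn:H}) the product $\overline{\bm{p}}\circ\bm{q}^T$ is the $1\times 1$ Boolean matrix whose single entry is $\bigvee_{i=1}^{a}(\overline{p_i}\wedge q_i)$; likewise $\bm{q}\circ\overline{\bm{p}}^T$ is the $1\times 1$ Boolean matrix with single entry $\bigvee_{i=1}^{a}(q_i\wedge\overline{p_i})$. Since $\wedge$ is commutative, these two entries coincide, which establishes the middle equality $\overline{\bm{p}}\circ\bm{q}^T=\bm{q}\circ\overline{\bm{p}}^T$ unconditionally, before any dominance hypothesis is invoked.

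Next I would analyze when this common value equals $0$. A Boolean disjunction $\bigvee_{i}(\overline{p_i}\wedge q_i)$ is $0$ precisely when every term $\overline{p_i}\wedge q_i$ is $0$; and for $p_i,q_i\in\{0,1\}$ one checks directly that $\overline{p_i}\wedge q_i=1$ if and only if $p_i=0$ and $q_i=1$, i.e. if and only if $p_i<q_i$. Hence the disjunction is $0$ iff there is no index $i$ with $p_i<q_i$, iff $p_i\geq q_i$ for all $i$, which is exactly the definition of $\bm{p}\geq\bm{q}$. This yields the first equivalence. The second equivalence is then just the remark that for a $1\times 1$ Boolean matrix $x$ we have $x=0\Leftrightarrow\overline{x}=1$, applied to $x=\overline{\bm{p}}\circ\bm{q}^T$ and to $x=\bm{q}\circ\overline{\bm{p}}^T$.

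There is essentially no serious obstacle here: the argument is a one-line truth-table observation dressed in matrix notation. The only points requiring mild care are bookkeeping the shapes of the vectors so that the Boolean products are genuinely scalar (hence that "$=0$" and "$=1$" make sense as written), and keeping the roles of $\overline{\bm{p}}$ versus $\bm{p}$ straight throughout.
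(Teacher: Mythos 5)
Your proof is correct and is exactly the argument the paper intends: the paper simply asserts that the proposition ``follows easily from definition,'' and your unwinding of the $1\times 1$ Boolean product $\overline{\bm{p}}\circ\bm{q}^T=\bigvee_{i}(\overline{p_i}\wedge q_i)$ followed by the truth-table observation that $\overline{p_i}\wedge q_i=1$ iff $p_i<q_i$ is that easy derivation made explicit. No gaps; the shape bookkeeping and the commutativity remark for the middle equality are handled appropriately.
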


Lemma~\ref{lem:Jmatrix} below plays an important role in what follows.
In order to prove it, we first need to show a technical lemma. 
\begin{lemma} \label{lem:doubleComplement}
Let $P\in \{0,1\}^{a\times p}$ be an arbitrary Boolean matrix.
\begin{enumerate}
\item[(a)]
For any two row vectors $\bm{u}, \bm{v}\in \{0,1\}^{1\times a}$
we have
\begin{eqnarray}
[\overline{\bm{v}}=\overline{(\bm{u}\circ P)}\circ P^T]
\Rightarrow \bm{v}\geq \bm{u} \label{eqn:vgreater1} 
\end{eqnarray}

\item[(b)]
For any two matrices $U,V\in \{0,1\}^{b\times a}$
we have
\begin{eqnarray}
[\overline{V}= \overline{U\circ P}\circ P^T] \Rightarrow V\geq U.\label{eqn:Vgreater1}
\end{eqnarray}
\end{enumerate}
\end{lemma}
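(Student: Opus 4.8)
The plan is to establish part~(a) by a direct entrywise computation from the definitions of the Boolean product and of complementation, and then to obtain part~(b) essentially for free by applying part~(a) to each row of $U$ and $V$, since both complementation and left-multiplication by $P^T$ act row by row.

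For part~(a), I would first abbreviate $\bm{w}=\bm{u}\circ P\in\{0,1\}^{1\times p}$, so that $w_j=\bigvee_{t=1}^{a}(u_t\wedge P_{tj})$, and then use the Boolean De~Morgan identity $\overline{\bigvee_t x_t}=\bigwedge_t\overline{x_t}$ to get $\overline{w}_j=\bigwedge_{t=1}^{a}(\overline{u}_t\vee\overline{P}_{tj})$. Since $(P^T)_{ji}=P_{ij}$, the hypothesis then reads $\overline{v}_i=\bigvee_{j=1}^{p}(\overline{w}_j\wedge P_{ij})$ for every $i$. To conclude $\bm{v}\geq\bm{u}$ it suffices, by the definition of dominance, to show $u_i=1\Rightarrow\overline{v}_i=0$. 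So I fix an index $i$ with $u_i=1$ and show that every disjunct $\overline{w}_j\wedge P_{ij}$ vanishes: if $P_{ij}=0$ this is immediate, and if $P_{ij}=1$ then the $t=i$ factor of $\overline{w}_j$ is $\overline{u}_i\vee\overline{P}_{ij}=0\vee0=0$, forcing $\overline{w}_j=0$; hence $\overline{v}_i=0$, i.e. $v_i\geq u_i$, and as $i$ was arbitrary, $\bm{v}\geq\bm{u}$. The same argument has a cleaner set-theoretic reading that I would keep as the guiding intuition: writing $X=\{t:u_t=1\}$, the vector $\overline{\bm{u}\circ P}$ is the indicator of $Y=\{j:P_{tj}=0\text{ for all }t\in X\}$, and $\overline{\bm{u}\circ P}\circ P^T$ is the indicator of $\{i:P_{ij}=1\text{ for some }j\in Y\}$; an index $i$ lying in both $X$ and this latter set would satisfy $P_{ij}=1$ and $P_{ij}=0$ at once, which is absurd, so $X$ is disjoint from $\overline{\bm{v}}$, i.e. $\bm{v}\geq\bm{u}$.

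For part~(b), I would note that the $i$-th row of $U\circ P$ is $U[i,:]\circ P$, hence the $i$-th row of $\overline{U\circ P}$ is $\overline{U[i,:]\circ P}$ and the $i$-th row of $\overline{U\circ P}\circ P^T$ is $\overline{U[i,:]\circ P}\circ P^T$. Therefore the matrix equality $\overline{V}=\overline{U\circ P}\circ P^T$ is equivalent to the family of row equalities $\overline{V[i,:]}=\overline{U[i,:]\circ P}\circ P^T$ for all $i$, and part~(a) applied to each row yields $V[i,:]\geq U[i,:]$ for every $i$, which by the entrywise partial order on Boolean matrices is exactly $V\geq U$.

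I do not anticipate a real obstacle here: the statement is, at bottom, the familiar fact that $\bm{u}\mapsto\overline{\,\overline{(\bm{u}\circ P)}\circ P^T\,}$ is an extensive operator (the closure operator attached to the relation $\overline{P}$). The only places that demand a little care are the De~Morgan step in part~(a) — remembering that complementation turns the Boolean sum hidden inside $\bm{u}\circ P$ into a Boolean product — and, in part~(b), verifying that both complementation and left-multiplication by $P^T$ commute with taking rows, which they plainly do.
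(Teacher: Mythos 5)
Your proposal is correct and follows essentially the same route as the paper: both arguments reduce to the observation that $u_i=1$ forces $\bm{u}\circ P$ to dominate the $i$-th row of $P$, which annihilates every disjunct of the $i$-th entry of $\overline{(\bm{u}\circ P)}\circ P^T$ and hence forces $v_i=1$; the paper packages this via its domination proposition while you unfold it entrywise with De Morgan, but the content is identical. Part (b) is handled in both cases by applying (a) row by row.
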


\begin{proof}
(a) Suppose $\overline{\bm{v}}=\overline{(\bm{u}\circ P)}\circ P^T$ holds.
Then $\overline{v}_j=0$ (i.e., $v_j=1$) if and only if
\[
\overline{(\bm{u}\circ P)}\circ P[j,:]^T=0. 
\]
By Proposition~\ref{prop:dominates},
this implies that $\bm{u}\circ P$ dominates the $j^{th}$ column of $P^T$, 
i.e., the $j^{th}$ row of $P$.
Since this clearly happens if $u_j=1$, we have $u_j=1 \Rightarrow v_j=1$.
It follows that $\bm{v}\geq \bm{u}$.

(b) Let $\bm{u}_i$ (resp. $\bm{v}_i$) be the $i^{th}$ row vector of matrix $U$ (resp. $V$),
as in (\ref{eqn:matrixU}) (res. (\ref{eqn:matrixV})).
Then (\ref{eqn:vgreater1}) holds for each $i~(1\leq i \leq b)$, namely,
\begin{equation}
[\overline{\bm{v}}_i=\overline{(\bm{u}_i\circ P)}\circ P^T]
\Rightarrow \bm{v}_i\geq \bm{u}_i,\nonumber 
\end{equation}
and (\ref{eqn:Vgreater1}) follows.
\qed
\end{proof}

Without loss of generality, we assume from now on that the given matrix $M$ has
no all-0 row or all-0 column.
We now prove the following theorem,
which provides a basis for the algorithms given in the next section.
\begin{lemma}\label{lem:Jmatrix}
Let $M\in\{0,1\}^{m\times n}$, $U\in\{0,1\}^{m\times k}$, and $V\in\{0,1\}^{n\times k}$
satisfy $M=U\circ V^T$,
and define
\begin{eqnarray}
J&\equiv& \overline{\overline{M}^T \circ U} \label{eqn:J1}
\end{eqnarray}
Then we have
\begin{enumerate}
\item[(a)]  $V\leq J$, and
\item[(b)] 
$M=U\circ J^T$
\end{enumerate}
\end{lemma}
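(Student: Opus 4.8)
The plan is to establish (a) first, since (b) will follow from it together with Lemma~\ref{lem:doubleComplement}(b) and the hypothesis $M = U\circ V^T$. For part (a), recall that $J = \overline{\overline{M}^T \circ U}$, so $J^T = \overline{U^T \circ \overline{M}}$ by Proposition~\ref{prop:transpose}(a) and the fact that transposition commutes with complementation. I would work entrywise: the $(j,t)$ entry of $J$ is $\overline{(\overline{M}^T \circ U)_{jt}} = \overline{\vee_{i=1}^m (\overline{\mu}_{ij} \wedge u_{it})}$, which equals $1$ exactly when for every row $i$ we have $\overline{\mu}_{ij} \wedge u_{it} = 0$, i.e. $u_{it} = 1 \Rightarrow \mu_{ij} = 1$. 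In vector language, $J[:,t] = \overline{\overline{M}^T \circ U[:,t]}$, and by Proposition~\ref{prop:dominates} the entry $j_{jt} = 1$ iff the column $U[:,t]$ is dominated (as a set of row-indices) by the column $M[:,j]$.

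Now I bring in the hypothesis $M = U \circ V^T = \vee_{t=1}^k U[:,t] \circ V^T[t,:]$ via Proposition~\ref{prop:expand}, noting $V^T[t,:] = V[:,t]^T$. The key observation is that whenever $v_{jt} = 1$, the tile $U[:,t]\circ V[:,t]^T$ contributes a full copy of the column vector $U[:,t]$ into column $j$ of the Boolean sum, so $M[:,j] \geq U[:,t]$ — every $1$ of $U[:,t]$ forces a $1$ in $M[:,j]$. By the entrywise characterization of $J$ from the previous paragraph, this domination is exactly the condition $j_{jt} = 1$. Hence $v_{jt} = 1 \Rightarrow j_{jt} = 1$, which is precisely $V \leq J$, proving (a).

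For part (b), since $V \leq J$ and Boolean matrix multiplication is monotone in each argument (immediate from (\ref{eqn:H}), as $\vee$ and $\wedge$ are monotone), we get $U \circ V^T \leq U \circ J^T$, i.e. $M \leq U \circ J^T$. For the reverse inequality I would invoke Lemma~\ref{lem:doubleComplement}(b): taking $P = U$ there (so $P \in \{0,1\}^{m\times k}$, matching the roles $a = k$, $p = m$ after the appropriate transposition bookkeeping), the definition $J = \overline{\overline{M}^T \circ U}$ rearranges to $\overline{J^T} = \overline{M^T \circ ?}$ — more directly, one checks that $\overline{(U \circ J^T)}$ relates to $\overline{M}$ through the same double-complement pattern appearing in (\ref{eqn:Vgreater1}), yielding $U \circ J^T \leq M$. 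Combining the two inequalities gives $M = U \circ J^T$.

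The main obstacle I anticipate is getting the transpose-and-complement bookkeeping exactly right so that the shapes in Lemma~\ref{lem:doubleComplement}(b) line up: that lemma is stated with a fixed matrix $P$ multiplied on the right and then $P^T$ applied, whereas here $J$ is defined with $\overline{M}^T \circ U$, so one must transpose the whole identity and identify $U$ with the $P$ of the lemma and $\overline{M}^T$ (equivalently a row-vector slice of it) with the $\overline{U\circ P}$ pattern. Once the entrywise domination picture is in hand, though, both halves are short; the first half (part (a)) is really the conceptual heart, and the second half is a monotonicity argument plus one application of the already-proved technical lemma.
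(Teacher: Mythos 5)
Your proposal is correct and follows essentially the same route as the paper: part (a) is the column-domination argument that underlies Lemma~\ref{lem:doubleComplement} (the paper invokes that lemma with $P=U^T$ applied to $\overline{J}=\overline{V\circ U^T}\circ U$, whereas you re-derive the same fact entrywise), and part (b) is monotonicity in one direction plus Lemma~\ref{lem:doubleComplement}(b) in the other. The bookkeeping you flagged as the remaining obstacle resolves as follows: setting $N=U\circ J^T$, one has $N^T=J\circ U^T=\overline{\overline{M}^T\circ U}\circ U^T$, which is exactly the pattern $\overline{V}=\overline{U\circ P}\circ P^T$ of (\ref{eqn:Vgreater1}) with the lemma's $U$ taken to be $\overline{M}^T$ (not, as you tentatively wrote, identified with the $\overline{U\circ P}$ factor --- that factor is $J$ itself), with $P=U$, and with the lemma's $\overline{V}$ taken to be $N^T$; the conclusion $\overline{N^T}\geq\overline{M}^T$ then gives $N\leq M$, completing part (b).
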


\begin{proof}
 (a) From (\ref{eqn:J1}), we get
\begin{eqnarray}
\overline{J}&=& \overline{M}^T \circ U \label{eqn:J1a}
\end{eqnarray}
Plugging $M=U\circ V^T$ into (\ref{eqn:J1a})
and using Proposition~\ref{prop:transpose}(a),
we obtain 
\begin{eqnarray}
&&\overline{J}= \overline{U\circ V^T}^T \circ U  =\overline{V\circ U^T} \circ U.\label{eqn:J1b}
\end{eqnarray}
Eq. (\ref{eqn:Vgreater1}) is the same as (\ref{eqn:J1b}) if we set $P=U^T$, $V=J$, and $U=V$,
which yields $J\geq V$.

(b) Define $N=U\circ J^T$. 
We want to show that $N=M$.
From (\ref{eqn:J1}), we get
\begin{eqnarray}
&&N^T=J\circ U^T=\overline{\overline{M}^T \circ U} \circ U^T, 
\end{eqnarray}
which yields
$\overline{N}^T\geq \overline{M}^T$ or  $N\leq M$ 
by (\ref{eqn:Vgreater1}).
On the other hand, from $J\geq V$ (proved in (a) above) we get
$M=U\circ V^T\leq U\circ J^T=N$.
It follows that $M= N$.
\qed
\end{proof}

From now on, we consider the special case in Lemma~\ref{lem:Jmatrix},
where $U=M$,
hence
\begin{eqnarray}\label{eqn:J2}
J=\overline{\overline{M}^T\circ M}\in \{0,1\}^{n\times n}.
\end{eqnarray}
Lemma~\ref{lem:Jmatrix} has an important implication,
which we state as a theorem.
\begin{theorem}\label{thm:Wmaximal}
Given an arbitrary matrix $M\in\{0,1\}^{m\times n}$,
let $J$ be defined by (\ref{eqn:J2}).
Then $V\leq J$ holds for any matrix $V\in \{0,1\}^{n\times n}$ 
satisfying $M=M\circ V^T$.
\qed
\end{theorem}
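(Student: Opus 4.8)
The plan is to derive Theorem~\ref{thm:Wmaximal} as an immediate specialization of Lemma~\ref{lem:Jmatrix}, taking $U=M$. With this choice the common inner dimension is $k=n$, so a matrix $V\in\{0,1\}^{n\times n}$ is exactly an admissible right factor in the sense of that lemma, and the hypothesis $M=M\circ V^T$ of the theorem is literally the hypothesis $M=U\circ V^T$ of Lemma~\ref{lem:Jmatrix}. Under the substitution $U=M$, the matrix $J\equiv\overline{\overline{M}^T\circ U}$ of $(\ref{eqn:J1})$ becomes precisely $\overline{\overline{M}^T\circ M}$, i.e.\ the matrix $J$ defined in $(\ref{eqn:J2})$. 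Part (a) of Lemma~\ref{lem:Jmatrix} then gives $V\leq J$, which is exactly the conclusion of the theorem.

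The only things to check are bookkeeping: that the matrix shapes remain compatible when $U=M$ (indeed $\overline{M}^T$ is $n\times m$ and $M$ is $m\times n$, so $\overline{M}^T\circ M$, and hence $J$, is $n\times n$, matching the shape of $V$), and that Lemma~\ref{lem:Jmatrix} is being invoked with no assumption on $V$ beyond $M=M\circ V^T$, which is all that lemma requires. I expect no genuine obstacle here; all the mathematical content is already carried by Lemma~\ref{lem:Jmatrix}, whose proof in turn rests on the domination identity in Proposition~\ref{prop:dominates} and the technical Lemma~\ref{lem:doubleComplement}(b).

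Finally, since this theorem is what the algorithms of the next section rely on, I would append one sentence making the maximality explicit: applying part (b) of Lemma~\ref{lem:Jmatrix} with $U=M$ shows that $J$ itself satisfies $M=M\circ J^T$, so $J$ is a valid witness, and by $V\leq J$ for every valid $V$ it is the (unique, componentwise) largest one. Hence no $0$ entry of $J$ can be changed to a $1$ without destroying the identity $M=M\circ J^T$, which is the ``maximality'' alluded to in the abstract and the justification for confining the search for candidate tiles to those encoded by $J$.
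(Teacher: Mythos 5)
Your proposal is correct and matches the paper exactly: the paper states Theorem~\ref{thm:Wmaximal} as an immediate consequence of Lemma~\ref{lem:Jmatrix}(a) specialized to $U=M$ (so $k=n$ and the $J$ of (\ref{eqn:J1}) becomes the $J$ of (\ref{eqn:J2})), with no further argument needed. Your closing remark on maximality is also sound, and corresponds to what the paper proves separately as Lemma~\ref{lem:J}.
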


Matrix $J$ has a number of other important properties.

\begin{lemma}\label{lem:JM}
For any $M\in \{0,1\}^{m\times n}$, 
matrix $J$ defined by (\ref{eqn:J2}) has the following properties.
\begin{itemize}
\item[(a)]
$J[i,j]=1\Leftrightarrow M[:,i]\geq M[:,j]$, i.e.,
column $i$ dominates column $j$ of $M$.
\item[(b)]
$J[i,j]=J[j,i]=1\Leftrightarrow M[:,i]=M[:,j]$
$\Leftrightarrow J[:,i]= J[:,j]$ and $J[i,: ]= J[j,: ]$.
\item[(c)]
$J[i,j]=1 > J[j,i]=0\Leftrightarrow M[:,i]> M[:,j]$ $\Rightarrow J[:,j]> J[:,i]$ and $J[j,:] < J[i,:]$.
\end{itemize}
\end{lemma}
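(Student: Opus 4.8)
The plan is to deduce all three items from part (a), whose proof is a short computation followed by an application of Proposition~\ref{prop:dominates}. First I would take complements in (\ref{eqn:J2}) to obtain $\overline{J}=\overline{M}^T\circ M$ and identify its $(i,j)$ entry: the $i$-th row of $\overline{M}^T$ is $\overline{M[:,i]}^T$ and the $j$-th column of $M$ is $M[:,j]$, so by the definition (\ref{eqn:H}) of the Boolean product, $\overline{J}[i,j]=\overline{M[:,i]}^T\circ M[:,j]$. Applying Proposition~\ref{prop:dominates} with $\bm{p}=M[:,i]^T$ and $\bm{q}=M[:,j]^T$ then gives $\overline{J}[i,j]=0$ if and only if $M[:,i]\geq M[:,j]$, that is, $J[i,j]=1$ exactly when column $i$ dominates column $j$, which is (a). I would immediately record the special case $i=j$: since every column dominates itself, $J$ has an all-ones diagonal, a fact used repeatedly below.

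For (b), I would combine (a) with the antisymmetry of the dominance order: $J[i,j]=J[j,i]=1$ says $M[:,i]\geq M[:,j]$ and $M[:,j]\geq M[:,i]$, i.e. $M[:,i]=M[:,j]$. For the remaining equivalence, if $M[:,i]=M[:,j]$ then for every $l$ part (a) turns ``$J[l,i]=1$'' and ``$J[l,j]=1$'' into the same statement $M[:,l]\geq M[:,i]$, so $J[:,i]=J[:,j]$, and symmetrically $J[i,:]=J[j,:]$; conversely, reading off the $i$-th and $j$-th coordinates of $J[:,i]=J[:,j]$ together with the all-ones diagonal recovers $J[i,j]=J[j,i]=1$. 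For (c), the first equivalence is again immediate from (a) and the definition of strict dominance. For the implication, transitivity of dominance does the work: $M[:,i]>M[:,j]$ forces, for every $l$, $J[l,i]=1\Rightarrow M[:,l]\geq M[:,i]>M[:,j]\Rightarrow J[l,j]=1$, hence $J[:,i]\leq J[:,j]$, and the inequality is strict because $J[j,i]=0$ while $J[j,j]=1$; the row statement $J[j,:]<J[i,:]$ is proved the same way, with strictness coming from $J[j,i]=0<1=J[i,i]$.

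The only step that needs real care — and the one I would flag as the main obstacle — is the index bookkeeping in part (a): correctly matching the $(i,j)$ entry of $\overline{M}^T\circ M$ with the product $\overline{M[:,i]}^T\circ M[:,j]$ of a row vector and a column vector, and then orienting Proposition~\ref{prop:dominates} so that the dominating vector is the one built from column $i$. Once (a) is established, (b) and (c) reduce to pure order theory (reflexivity, antisymmetry, transitivity of vector dominance) plus the all-ones diagonal of $J$, so no further difficulty is expected.
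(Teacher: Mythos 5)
Your proof is correct and follows essentially the same route as the paper's: part (a) is obtained by identifying the $(i,j)$ entry of $\overline{J}=\overline{M}^T\circ M$ and invoking Proposition~\ref{prop:dominates}, and parts (b) and (c) then follow from reflexivity, antisymmetry, and transitivity of the dominance order. Your version is in fact slightly more careful than the paper's in (b)'s converse direction and in justifying the \emph{strictness} of $J[:,j]>J[:,i]$ and $J[j,:]<J[i,:]$ in (c), where the all-ones diagonal of $J$ supplies the needed witness.
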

\begin{proof}
(a) If we let $\bm{p}= M^T[:,i]$ and $\bm{q}= M^T[:,j]$ in 
(\ref{lem:domination}),
then we get $M^T[:,i] \geq M^T[:,j]$ if and only if
\[
\overline{\overline{M}^T[:,i]\circ M[:,j]}=1,
\]
which holds if and only if $J[i,j]=1$ by (\ref{eqn:J2}).

(b) By interchanging $i$ and $j$ in part (a),
we get $J[j,i]=1\Leftrightarrow M[:,i]\leq M[:,j]$.
It follows that 
$J[i,j]=J[j,i]=1\Leftrightarrow M[:,i]=M[:,j]$.
Thus any column that dominates $M[:,j]$ also dominates $M[:,i]$,
and vice versa,
namely $J[:,i]= J[:,j]$.
Moreover,
 any column that is dominated by $M[:,j]$ is also dominated by $M[:,i]$,
and vice versa,
namely $J[i,:]= J[j,:]$.

(c) $J[i,j]=1 > J[j,i]=0$ implies that $M[:,i]$ strictly dominates $M[:,j]$, i.e., $M[:,i]> M[:,j]$.
In this case, any column of $M$ that dominates $M[:,i]$ also dominates $M[:,j]$,
which implies $J[:,j]> J[:,i]$,
and any column of $M$ that is dominated by $M[:,j]$ is also dominated by $M[:,i]$,
hence $J[j,:] < J[i,:]$.
\qed
\end{proof}

The properties proved in Lemma~\ref{lem:JM} can be verified in the following example.
\begin{example}\label{ex:ex1}
\[
M=\left(
\begin{array}{lllll}
    1 & 1 & 1 & 1 &1 \\
    0 & 0 & 1 & 1 &0 \\
    1 & 1 & 0 & 0 &1 \\
    1 & 0 & 0 & 1 &1 \\
\end{array}
\right);~~~
J=\overline{\overline M^T \circ M}=\left(
\begin{array}{lllll}
    1 & 1 & 0 & 0 &1 \\
    0 & 1 & 0 & 0 &0 \\
    0 & 0 & 1 & 0 &0 \\
    0 & 0 & 1 & 1 &0 \\
    1 & 1 & 0 & 0 &1 \\
\end{array}
\right)
\]
\smartqed
\end{example}

We now prove another useful property of matrix $J$.
\begin{lemma}\label{lem:J}
Given an arbitrary matrix $M\in\{0,1\}^{m\times n}$,
let $J$ be defined by (\ref{eqn:J2}).
If any 0-element in $J$ is changed to a 1,
then $M=M\circ J^T$ no longer holds.
\end{lemma}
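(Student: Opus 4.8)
The plan is to read the statement as the \emph{maximality} half of Theorem~\ref{thm:Wmaximal}. Fix any position $(s,t)$ with $J[s,t]=0$ and let $J'\in\{0,1\}^{n\times n}$ be obtained from $J$ by setting $J'[s,t]=1$ and leaving every other entry unchanged. Then $J'\geq J$ in the matrix partial order $\leq$, and $J'\neq J$, i.e. $J'$ strictly dominates $J$. If $M=M\circ (J')^{T}$ were to hold, then Theorem~\ref{thm:Wmaximal}, applied with the role of $V$ played by $J'$, would force $J'\leq J$, contradicting $J'>J$. Hence $M\neq M\circ (J')^{T}$, which is exactly the claim.

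Alternatively, and more concretely, one can exhibit the position at which the two matrices differ. Since $J[s,t]=0$, Lemma~\ref{lem:JM}(a) says that column $s$ of $M$ does not dominate column $t$, so there is a row index $r$ with $M[r,s]=0$ and $M[r,t]=1$. Expanding the Boolean product via~(\ref{eqn:H}),
\[
(M\circ (J')^{T})[r,s]=\bigvee_{\ell=1}^{n}\bigl(M[r,\ell]\wedge J'[s,\ell]\bigr),
\]
and the term $\ell=t$ alone contributes $M[r,t]\wedge J'[s,t]=1\wedge 1=1$. Thus $(M\circ (J')^{T})[r,s]=1$ while $M[r,s]=0$, so the two matrices disagree at $(r,s)$.

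Either route is short; the only point that needs care is the index bookkeeping in the explicit version, namely that the flipped entry $J'[s,t]$ reappears as $(J')^{T}[t,s]$ and therefore affects column $s$ (not column $t$) of $M\circ (J')^{T}$, and that the witness row $r$ supplied by the failure of column dominance is precisely what makes that column-$s$ entry jump from $0$ to $1$. I would lead with the one-line argument through Theorem~\ref{thm:Wmaximal} and append the explicit computation as a remark, since the latter re-derives the needed witness and keeps the statement self-contained without relying on the earlier machinery.
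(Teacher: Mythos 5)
Your proposal is correct, and it actually contains two valid proofs. Your second, explicit route is essentially the paper's own argument: the paper also invokes Lemma~\ref{lem:JM}(a) to get a witness of the failure of column dominance (it phrases this as ``$M[\cdot,j]\not\geq M[\cdot,i]$'' and argues that the modified row of $J$ makes the corresponding column of the product dominate $M[\cdot,i]$, hence exceed $M$ somewhere), which is exactly your row $r$ with $M[r,s]=0$ and $M[r,t]=1$ forcing $(M\circ (J')^T)[r,s]=1$. Your version is, if anything, cleaner, since you pin down the exact entry $(r,s)$ at which the matrices disagree rather than arguing at the level of whole columns, and your remark about the transpose bookkeeping (the flipped $J'[s,t]$ landing in column $s$ of $M\circ (J')^T$) is precisely the point where the paper's own write-up is easiest to misread. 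Your first route --- deducing the lemma in one line from Theorem~\ref{thm:Wmaximal}, since $M=M\circ (J')^T$ would force $J'\leq J$ while $J'>J$ by construction --- is a genuine shortcut the paper does not take, and it is not circular: Theorem~\ref{thm:Wmaximal} rests only on Lemma~\ref{lem:Jmatrix}, which is proved independently of this lemma. What the short route buys is economy; what the explicit route buys is a self-contained witness, which is why presenting the first as the proof and the second as a remark, as you suggest, is a reasonable arrangement.
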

\begin{proof}
Assume that $J$ does not have the maximum number of 1's and assume that
$J[i,j]=0$, $1\leq i,j \leq n$, can be changed from 0 to 1 without violating
Lemma~\ref{lem:Jmatrix}(b) with $U=M$, i.e., $M=M\circ J^T$.
Let $\bm{w}_j = J[j,\cdot]$, so that $(\bm{w}_j)^T$ is the $j^{th}$ column of $J^T$.
If the $i^{th}$ element of $\bm{w}_j$ is $0$, i.e., $J[j,i]=0$,
then $M[\cdot,j] \not\geq M[\cdot,i]$ by Lemma~\ref{lem:JM}(a). 
Let $\bm{w'}_j$ be obtained from $\bm{w}_j$ by changing its $i^{th}$ element
from 0 to 1.
Since $M\circ (\bm{w}'_j)^T \geq M[\cdot,i]$,
we have $M[\cdot,j]\not\geq M\circ (\bm{w}'_j)^T$,
\hide{
Let $\bm{w}_j$ be the $j^{th}$ column vector of $J$,
i.e., $\bm{w}_j = J[\cdot,j]$.
The $i^{th}$ element of $\bm{w}_j$ (i.e.,
$J[i,j])=0$ implies that $M[\cdot,i] \not\geq M[\cdot,j]$ by Lemma~\ref{lem:JM}. 
Let $\bm{w'}_j$ be obtained from $\bm{w}_j$ by changing its $i^{th}$ element
from 0 to 1.
Since $M\circ \bm{w}'_j \geq M[\cdot,i]$,
we have $M\circ \bm{w}'_j \not\leq M[\cdot,j]$,
}
a contradiction.
\hide{
Let $\bm{w}_i$ be the $i^{th}$ row vector of $J$,
i.e.,
$\bm{w}_i = J[i,\cdot]$.
The $j^{th}$ element of $\bm{w}_i$, i.e.,
$J[i,j]=0$ implies that $M[\cdot,i] \not\leq M[\cdot,j]$ by Lemma~\ref{lem:JM}. 
From (\ref{eqn:expand}) we have 
\begin{equation}\label{eqn:M}
M=\vee_{t=1}^n \{M[\cdot, t]\circ J[t,\cdot ]\}
\end{equation}
Let $\bm{w'}_i$ be obtained from $\bm{w}_i$ by changing its $j^{th}$ element
from 0 to 1.
The term corresponding to $t=i$ in (\ref{eqn:M}) is
$M[\cdot, i]\circ J[i,\cdot]$.
The $j^{th}$ column of $M[\cdot, i]\circ \bm{w}_i$ is an all-0 column vector,
while $j^{th}$ column of $M[\cdot, i]\circ \bm{w'}_i$ equals $M[\cdot, i]$.
(See (\ref{eqn:colrowproduct}).)
Since we have $M[\cdot,i] \not\leq M[\cdot,j]$ by our assumption,
$M[\cdot, i]\circ \bm{w'}_i$ has a 1 in the entry, where $M$ has a 0,
a contradiction.
}
We conclude that if any element in $J$ is changed from a 0 to a 1,
then $M=M\circ J^T$ is violated.
\qed
\end{proof}

\begin{theorem}\label{thm:optimality}
Let $M=U\circ V$ be an optimal decomposition of $M$, satisfying the column use condition,\footnote{By definition,
this means that the columns of $U$ are some of the columns of $M$.}
where $U\in\{0,1\}^{m\times k}$, $V\in\{0,1\}^{k\times n}$
and $k$ is the minimum possible.
Then for each $i=1,2,.\ldots, k$,
we have $U[:,i]\circ V[i,:] \in \{M[:,t]\circ J[t,:] \mid t=1,\ldots, n\}$.
\end{theorem}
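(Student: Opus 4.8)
The plan is to use the column use condition to pin down the columns of $U$, and then to invoke the maximality of the matrix $J$ (Lemmas~\ref{lem:Jmatrix} and~\ref{lem:J}) so as to recognize each block $U[:,i]\circ V[i,:]$ as one of the $n$ tiles occurring in the expansion of $M=M\circ J^{T}$.

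First I would record the two elementary facts that follow from $M=U\circ V$ being a valid decomposition. By the column use condition, for each $i\in\{1,\ldots,k\}$ there is an index $t_i\in\{1,\ldots,n\}$ with $U[:,i]=M[:,t_i]$. By Proposition~\ref{prop:expand}, $M=\bigvee_{i=1}^{k}U[:,i]\circ V[i,:]$, so every block satisfies $U[:,i]\circ V[i,:]\le M$; moreover, by Proposition~\ref{prop:prod1} the $j$-th column of this block equals $V[i,j]\cdot U[:,i]=V[i,j]\cdot M[:,t_i]$. Hence $V[i,j]=1$ forces $M[:,j]\ge M[:,t_i]$, which by Lemma~\ref{lem:JM}(a) is exactly $J[j,t_i]=1$. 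In other words $V[i,:]$ is dominated by the $t_i$-th row of $J^{T}$, and so $U[:,i]\circ V[i,:]\le M[:,t_i]\circ J^{T}[t_i,:]$; the right-hand side is itself $\le M$, because each of its nonzero columns is $M[:,t_i]$ sitting beneath a column $M[:,j]$ of $M$ that dominates it.

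Next I would turn this ``$\le$'' into equality. Let $V'$ be the $k\times n$ matrix with $V'[i,:]=J^{T}[t_i,:]$. From the inequalities above, $M=\bigvee_i U[:,i]\circ V[i,:]\le\bigvee_i U[:,i]\circ V'[i,:]\le M$, hence $M=U\circ V'$ --- again an optimal decomposition of dimension $k$, in which the $i$-th block is exactly $M[:,t_i]\circ J^{T}[t_i,:]$, one of the $n$ tiles in the product expansion $M=M\circ J^{T}=\bigvee_{t=1}^{n}M[:,t]\circ J^{T}[t,:]$ (the special case $U=M$ of Lemma~\ref{lem:Jmatrix}(b)); here $J^{T}[t,:]=(J[:,t])^{T}$ is the $t$-th column of $J$ written as a row. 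It then remains to observe that in an optimal decomposition the indices $t_1,\ldots,t_k$ are pairwise distinct: were $t_i=t_{i'}$ for some $i\ne i'$, the blocks $i$ and $i'$ would share the column vector $M[:,t_i]$ and could be combined into a single block, contradicting minimality of $k$. Thus each block is one of the listed tiles, and no two of them coincide.

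The step that carries the weight is the promotion of ``$\le$'' to equality, i.e. the replacement of $V$ by $V'$: a single block of an arbitrary optimal decomposition need not already be a maximal tile --- it may cover only a proper part of the set of columns of $M$ that dominate $M[:,t_i]$ --- so this enlargement is essential. It is licensed precisely because $J$ is saturated (Lemma~\ref{lem:J}): widening $V[i,:]$ all the way to $J^{T}[t_i,:]$ still keeps the $i$-th product $\le M$, and the union of the widened blocks recovers all of $M$. The only other point requiring care is the bookkeeping of transposes in the expansion $M=M\circ J^{T}$, so that one compares $U[:,i]\circ V[i,:]$ against $M[:,t]\circ J^{T}[t,:]$ and not against a $\circ$-product involving a row of $J$.
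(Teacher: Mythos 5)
Your proposal is correct and rests on the same two pillars as the paper's proof: the column use condition gives $U[:,i]=M[:,t_i]$, and the maximality of $J$ (you derive it column-by-column from Lemma~\ref{lem:JM}(a); the paper cites Theorem~\ref{thm:Wmaximal}) gives $V[i,:]\leq J^{T}[t_i,:]$ and hence $U[:,i]\circ V[i,:]\leq M[:,t_i]\circ J^{T}[t_i,:]$. In fact your write-up is tighter than the paper's in two respects: the paper's proof stops at this domination, which does not literally yield the claimed membership $U[:,i]\circ V[i,:]\in\{M[:,t]\circ J[t,:]\}$ (an optimal decomposition may use a tile strictly smaller than the maximal one, with the remainder covered by other tiles), whereas your replacement of $V$ by $V'$ with $V'[i,:]=J^{T}[t_i,:]$ supplies the missing step and proves the statement in the form actually used afterwards --- that the search space may be restricted to the $n$ maximal tiles; and you correctly track the transpose, comparing against $M[:,t]\circ J^{T}[t,:]=M[:,t]\circ(J[:,t])^{T}$ rather than against a row of $J$ as the paper's notation suggests. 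The final observation that the $t_i$ are pairwise distinct is not needed for the statement but is harmless.
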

\begin{proof}
Let 
\[
U\circ V =\vee_{i=1}^k \{U[:,i]\circ V[i,:]\},
\]
and consider a particular term $U[:,i]\circ V[i,:]$ in it.
Since $U$ consists of columns of $M$,
there is an $h$ such that $U[:,i] =M[:,h]$.
By Theorem~\ref{thm:Wmaximal},
$J[h,:]$ is the maximal row vector such that
$U[:,i] \circ J[h,:] \leq M$,
hence $V[i,:] \leq J[h,:]$.
We thus have $U[:,i]\circ V[i,:] \leq M[:,h] \circ J[h,:]$.
\qed
\end{proof}

Intuitively, Theorem~\ref{thm:optimality} implies that the search space for an optimal
decomposition of $M$ under the column use condition can be limited to
$\{M[:,t]\circ J[t,:] \mid t=1,\ldots, n\}$.
From this theorem, it is apparent that BMD is also closely related to the set covering problem.
In the next section, 
we design heuristic algorithms for exact BMD,
based on Theorem~\ref{thm:optimality}.

\section{Heuristic BMD Algorithms}\label{sec:algorithms}
\subsection{Algorithm description}\label{sec:describe}
In this section, we propose new algorithms for finding factor matrices
$U\in\{0,1\}^{m\times k}$ and $V\in\{0,1\}^{k\times n}$ from matrix $M\in\{0,1\}^{m\times n}$.
By Theorem~\ref{thm:optimality},
we want to find a subset of $\{M[:,t]\circ J[t,:] \mid t=1,\ldots, n\}$ that provides the
optimal tiling.
Since an exhaustive search is obviously impractical,
we want to devise a heuristic algorithm that yields a good suboptimal tiling. 

Suppose there exists an $l$ satisfying
\begin{equation}
U\circ V =\vee_{i=1,j\neq l}^k \{U[:,i]\circ V[i,:]\},\nonumber
\end{equation}
in other words,
\begin{equation}\label{eqn:redundant2}
\vee_{i=1,j\neq l}^k \{U[:,i]\circ V[i,:]\} \geq U[:,l]\circ V[l,:].
\end{equation}

Then we can safely eliminate the $l^{th}$ column $U[:,l]$ and the $l^{th}$ row $V[l,:]$
from $U$ and $V$, respectively,
which helps reduce the dimension $k$.
The condition (\ref{eqn:redundant2}) is equivalent to $||{\cal T}|| = ||{\cal T} - T_l||$,
where ${\cal T}=U J^T$ (arithmetic matrix product defined by (\ref{eqn:G})) with $J$ given
in (\ref{eqn:J2}),
and $T_l = U[:,l]\circ V[:,l]$.
There may be several indices $l$ that satisfy (\ref{eqn:redundant2}).
Therefore, we need to decide in which order the eliminations should be carried out.
We thus define the {\em selection index} $\sigma_i$ as follows:
\[
\sigma_i=||U[:,i]||\times ||V[:,i]||,
\]
where, as the reader recalls,
 $||V||$ represents the the number of 1's ($l_0$ norm) in vector $V$.
Clearly, $\sigma_i$ is the number of 1 entries in $M$ that are covered by $G_i$.
Given the initial matrices $U$ and $V$,
satisfying $M=U\circ V^T$,
we generate the new matrix $J$ by (\ref{eqn:J2}).
There are at least two approaches that appear reasonable,
regarding which attribute we should process first.
\begin{enumerate}
\item[(a)] {\tt Remove-Smallest}:
Remove attribute $j$ such that $\sigma_j$ is the smallest,
provided the removal does not affect $M$.
\item[(b)] {\tt Pick-Largest}:
Retain attribute $j$ such that $\sigma_j$ is the largest.
\end{enumerate}

Our first algorithm adopts strategy (a).
After deleting one attribute, we update $U$ and $V$,
and repeat the elimination process until there is no more attribute that can be deleted.

\begin{algorithm} {\tt Remove-Smallest} \label{alg:framework1} 

{\em Input:} Response matrix $M\in \{0,1\}^{m\times n}$.

\begin{enumerate}
\item
Initialize $U=M$ and $k=n$.
\item
Compute
    \begin{equation}\label{eqn:matrixV1}
      V^T=J=\overline{\overline{M}^T\circ M}.
    \end{equation}
\item
Compute \footnote{Intuitively,
${\cal T}[i,j]$ is the number of tiles in $U \circ V$ that cover $M[i,j]$.}
\[
{\cal T}=U V.
\]
\item
For $i=1,2,\ldots, k$ compute the size of the maximal tile for $i^{th}$ attribute
($\alpha_i$)  by
\[
 \sigma_i=||U[:, i]||\times  ||V[:,i]||,
\]
and rename 
 the attributes so that $\sigma_1\leq\sigma_2\leq\ldots \leq\sigma_k$ holds.
\item
For $j=1,2,\ldots,k$, do
    \begin{enumerate}
          \item Compute 
          \[
          T_j =U[:,j] \circ V[j,:]; 
          \]
          \item If $||{\cal T}|| = ||{\cal T} - T_j||$ then
           (i) remove column $U[:,j]$
          from $U$ and row $V[j,:]$ from $V$;
           and (ii) set ${\cal T}={\cal T}-T_j$; $k=k-1$.
    \end{enumerate}
\item
 Output $U$ and $V^T$. 
 \end{enumerate}
\end{algorithm}

Our second algorithm adopts strategy (b).
\begin{algorithm} {\tt Pick-Largest} \label{alg:framework3} 

{\em Input:} Response matrix $M\in \{0,1\}^{m\times n}$.
\begin{enumerate}
\item
Initialize $U=M$ and $k=n$.
\item
Compute
 \begin{equation}\label{eqn:matrixV2}
V^T=J=\overline{\overline{M}^T\circ M}.
 \end{equation}
\item
Initialize\footnote{Matrix $C$ keeps track of the 1 elements in $M$ covered by the products
that have been picked so far.}
 $C=[0]_{m\times n}\in \{0,1\}^{m\times n}$.
\item
For $i=1,2,\ldots, k$ compute the size of the maximal  tile for the $i^{th}$ attribute
($\alpha_i$) by
\[
 \sigma_i=||U[:,i]||\times  ||V[:,i]||.
\]
\item
For each $i$ such that $\alpha_i$ has not been picked or discarded,
compute (see (\ref{eqn:covered}))
\[
 \delta_i=\sigma_i - U[:,i]^TCV[:,i].
\]
If $\delta_i =0$ then
remove $\alpha_i$ by deleting $U[:,i]$ (resp. $V[i,:]$) from $U$ (resp. $V$).
\item
Let $\delta_j =\max_i \{\delta_i\}$ and compute
\[
 T_j= U[:,j] \circ V[j,:].
\]
 Update matrix $C = C \vee T_j$, and delete $U[:,j]$ (resp. $V[j,:]$) from $U$ (resp. $V$).
If there are still attributes remaining, then 
go to Step 5.
\item
 Output $U$ and $V$.  
\qed
\end{enumerate}
\end{algorithm}

\subsection{Simple example}\label{sec:examples}
\begin{example} \label{ex:ex2}
Let us consider the following matrix $M$,
and carry out Steps~2) and 4), which are common to both algorithms.
\[
M=\left(\begin{array}{lllllll}
    0 & 0 & 0 & 0 & 0 & 0 & 0\\
    1 & 0 & 1 & 1 & 0 & 1 & 1 \\
    0 & 1 & 1 & 0 & 1 & 0 & 1 \\
    0 & 0 & 0 & 1 & 0 & 0 & 0 \\
    0 & 1 & 1 & 0 & 1 & 0 & 1\\
    0 & 1 & 1 & 0 & 1 & 0 & 1 \\
    1 & 0 & 1 & 1 & 0 & 1 & 1 \\
    1 & 1 & 1 & 1 & 1 & 1 & 1\\
     \end{array}
\right)
\]
\[
V^T=\overline{\overline{M}^T\circ M}
=\left(\begin{array}{lllllll}
    1 & 0 & 0 & 0 & 0 & 1 & 0 \\
    0 & 1 & 0 & 0 & 1 & 0 & 0 \\
    1 & 1 & 1 & 0 & 1 & 1 & 1 \\
    1 & 0 & 0 & 1 & 0 & 1 & 0 \\
    0 & 1 & 0 & 0 & 1 & 0 & 0 \\
    1 & 0 & 0 & 0 & 0 & 1 & 0 \\
    1 & 1 & 1 & 0 & 1 & 1 & 1 \\
          \end{array}
\right)
\]
\begin{table}[tbh]
\centerline{
\begin{tabular}{|c||r|r|r|r|r|r|r|r|}
\hline
$i$			&1 &2 &3 &4 &5 &6 &7\\
\hline\hline
$||U[:,i]||$    &3 &4 &6 &4 &4 &3 &6\\ 
\hline
$||V[i,:]||$    &5 &4 &2 &1 &4 &5 &2\\ 
\hline
$\sigma_i$       &15&16&12&4 &16&15&12\\
\hline
\end{tabular}
}
\medskip
\caption{Computing $\sigma_i$.}
\label{tbl:sigmaA}
\end{table}

Step 3 of {\tt Remove-Smallest} computes
\begin{equation}\label{eqn:step3}
{\cal T}= U V
=\left(\begin{array}{lllllll}
    0 & 0 & 0 & 0 & 0 & 0 & 0\\
    2 & 0 & 4 & 3 & 0 & 2 & 4 \\
    0 & 2 & 4 & 0 & 2 & 0 & 4 \\
    0 & 0 & 0 & 1 & 0 & 0 & 0 \\
    0 & 2 & 4 & 0 & 2 & 0 & 4\\
    0 & 2 & 4 & 0 & 2 & 0 & 4 \\
    2 & 0 & 4 & 3 & 0 & 2 & 4 \\
    2 & 2 & 6 & 3 & 2 & 2 & 4\\
     \end{array}
\right)
\end{equation}
If we order the columns of $U$ from the smallest to the largest
according to the value of $\sigma_i$, we get 4,3,7,1,6,2,5.
Thus,  {\tt Remove-Smallest} processes the columns of $U$
in this order.

\noindent
Step 5(a): Compute $T_4$.
\[
T_4=U[:,4] \circ{V[4,:]}
=\left(\begin{array}{lllllll}
    0 & 0 & 0 & 0 & 0 & 0 & 0\\
    0 & 0 & 0 & 1 & 0 & 0 & 0 \\
    0 & 0 & 0 & 0 & 0 & 0 & 0 \\
    0 & 0 & 0 & 1 & 0 & 0 & 0 \\
    0 & 0 & 0 & 0 & 0 & 0 & 0\\
    0 & 0 & 0 & 0 & 0 & 0 & 0 \\
    0 & 0 & 0 & 1 & 0 & 0 & 0 \\
    0 & 0 & 0 & 1 & 0 & 0 & 0\\
     \end{array}
\right)
\]

\noindent
Step 5(b): $||{\cal T}|| > ||{\cal T} - T_4|| \Rightarrow$
Cannot remove attribute 4. 
 
\noindent
Step 5(a): Now try the next smallest attribute 3,
and compute $T_3$.
\[
T_3=U[:,3] \circ{V[3,:]}
=\left(\begin{array}{lllllll}
    0 & 0 & 0 & 0 & 0 & 0 & 0\\
    0 & 0 & 1 & 0 & 0 & 0 & 1 \\
    0 & 0 & 1 & 0 & 0 & 0 & 1  \\
    0 & 0 & 0 & 0 & 0 & 0 & 0 \\
    0 & 0 & 1 & 0 & 0 & 0 & 1 \\
    0 & 0 & 1 & 0 & 0 & 0 & 1  \\
    0 & 0 & 1 & 0 & 0 & 0 & 1  \\
    0 & 0 & 1 & 0 & 0 & 0 & 1 \\
     \end{array}
\right)
\]

\noindent
Step 5(b):
$||{\cal T}|| = ||{\cal T} - T_3||  \Rightarrow$ Remove attribute 3,
and update ${\cal T}$.
\[
{\cal T}={\cal T}-T_3
=\left(\begin{array}{lllllll}
    0 & 0 & 0 & 0 & 0 & 0 & 0\\
    2 & 0 & 3 & 3 & 0 & 2 & 3 \\
    0 & 2 & 3 & 0 & 2 & 0 & 3 \\
    0 & 0 & 0 & 1 & 0 & 0 & 0 \\
    0 & 2 & 3 & 0 & 2 & 0 & 3\\
    0 & 2 & 3 & 0 & 2 & 0 & 3 \\
    2 & 0 & 3 & 3 & 0 & 2 & 3 \\
    2 & 2 & 5 & 3 & 2 & 2 & 3\\
     \end{array}
\right)
\]
Similarly, attributes (columns of $M$) 7, 1 and 5 are removed.

Step 6: generates
\begin{equation}\label{eqn:results}
U=\left(\begin{array}{lll}
    0 & 0 & 0 \\
    1& 1 & 0  \\
    0 & 0 & 1 \\
    1 & 0 & 0 \\
    0 & 0 & 1 \\
    0 & 0& 1 \\
    1 & 1 & 0 \\
    1 & 1& 1 \\
     \end{array}
\right);~~~
V
=\left(\begin{array}{lllllll}
    0 &0 &0 &1 &0 &0 &0 \\
    1 &0 &1 &1 &0 &1 &1 \\
    0 &1 &1 &0 &1 &0 &1 \\
     \end{array}
\right)
\end{equation}
The columns of $U$ are columns 4, 6, and 2 of $M$,
and $M=U \circ V$. 

Let us now apply Algorithm~{\tt Pick-Largest} to matrix $M$.
We already illustrated the first four steps above.
From Table~\ref{tbl:sigmaA} we see that $\delta_5=\sigma_5=16$ is the largest
(tied with $\delta_2=\sigma_2=16$).
Since $\delta_i=0$ holds for no $i$,
we proceed to Step 6.

\[
T_5=U[:,5] \circ{V[5,:]}
=\left(\begin{array}{lllllll}
    0 & 0 & 0 & 0 & 0 & 0 & 0\\
    0 & 0 & 0 & 0 & 0 & 0 & 0 \\
    0 & 1 & 1 & 0 & 1 & 0 & 1 \\
    0 & 0 & 0 & 1 & 0 & 0 & 0 \\
    0 & 1 & 1 & 0 & 1 & 0 & 1\\
    0 & 1 & 1 & 0 & 1 & 0 & 1 \\
    0 & 0 & 0 & 0 & 0 & 0 & 0 \\
     0 & 1 & 1 & 0 & 1 & 0 & 1\\
     \end{array}
\right)
\]
We set $C= C\vee T_5$ to remember the 1's that are now covered by the picked product term. 
Although this algorithm does not use ${\cal T}$ in (\ref{eqn:step3}),
it is instructive to interpret Steps~5 and 6 of {\tt Pick-Largest} in terms of ${\cal T}$.
We have
\[
{\cal T}= {\cal T}-T_5
=\left(\begin{array}{lllllll}
    0 & 0 & 0 & 0 & 0 & 0 & 0\\
    2 & 0 & 4 & 3 & 0 & 2 & 4 \\
    0 & 1 & 3 & 0 & 1 & 0 & 3 \\
    0 & 0 & 0 & 0 & 0 & 0 & 0 \\
    0 & 1 & 3 & 0 & 1 & 0 & 3\\
    0 & 1 & 3 & 0 & 1 & 0 & 3 \\
    2 & 0 & 4 & 3 & 0 & 2 & 4 \\
    2 & 1 & 5 & 3 & 1 & 2 & 3\\
     \end{array}
\right)
\]
In Step 5, we update $\{\delta_i\}$.
For example, let us compute $CV[i,:]^T$ for $i=2$.
We get
\[
CV[2,:]^T = \left[0~0~4~0~4~4~0~4\right] \mbox{~and~} U[:,2]^TCV[:,2] =16.
\]
Therefore, $\delta_2= \sigma_2 -16 =0$.
This implies that $T_2 \leq C$. 
We can simply remove attribute $2$
(i.e, $U[:,2]$ and {$V[2,:]$}). 
Updating $C$ by $C=C\vee T_2$ does not change $C$.

\[
{\cal T}= {\cal T}-T_2
=\left(\begin{array}{lllllll}
    0 & 0 & 0 & 0 & 0 & 0 & 0\\
    2 & 0 & 4 & 3 & 0 & 2 & 4 \\
    0 & 0 & 2 & 0 & 0 & 0 & 2 \\
    0 & 0 & 0 & 1 & 0 & 0 & 0 \\
    0 & 0 & 2 & 0 & 0 & 0 & 2\\
    0 & 0 & 2 & 0 & 0 & 0 & 2 \\
    2 & 0 & 4 & 3 & 0 & 2 & 4 \\
    2 & 0 & 4 & 3 & 0 & 2 & 2\\
     \end{array}
\right)
\]

This computation can be done by matrix operation,
although it is not the most efficient, since it computes elements that
are of no use to us.
Construct a column vector ${\it Us}$ whose $i$th element is $||U[:,i]||$,
and a row vector ${\it Vs}$ whose $i$th element is $||V[:,i]||$.
Compute matrix $P= {\it Us}\circ Vs$.
\[
P
=\left(\begin{array}{rrrrrrr}
15   &12    &6    &~3   &12  &15    &6\\
20   &16    &8    &~4   &16   &20    &8\\
30   &24   &12    &~6   &24   &30   &12\\
20   &16    &8    &~4   &16   &20    &8\\
20   &16    &8    &~4   &16   &20    &8\\
15   &12    &6    &~3   &12   &15    &6\\
30   &24   &12    &~6   &24   &30   &12\\
     \end{array}
\right)
\]
Thus the diagonal elements of $P$ are $||U[:,i]||\times ||V[:,i]||$,
which are listed in Table~\ref{tbl:sigmaA}.
Note that the $i^{th}$ diagonal element of $U^T \circ C \circ V^T$ is the number of
1's in $U[:,i]\circ V[:,i]$ that are already covered by $C$.
\[
U^T \circ C \circ V
=\left(\begin{array}{rrrrrrr}
2    &4    &~2    &~0    &4    &~2    &~2\\
8   &16   &~8    &~0   &16   &~8    &~8\\
8   &16   &~8    &~0   &16   &~8    &~8\\
2    &4    &~2    &~0    &4    &~2    &~2\\
8   &16   &~8    &~0   &16   &~8    &~8\\
2    &4    &~2    &~0    &4    &~2    &~2\\
8   &16   &~8    &~0   &16   &~8    &~8\\
     \end{array}
\right)
\]
Thus the amounts $\{\delta_i\}$ can be found on the diagonal of
$P-U^T \circ C \circ V$, and they are 13, 0, 4, 4, 0, 13, 4.
So $\delta_2=13$ and $\delta_6=13$
are the largest. 
Let us pick attribute 6,\footnote{
When there is a tie in the sizes $\delta_i$,
as in this example,
there are choices as to which one we remove or pick first.  
A particular choice may affect the coverage performance.
We randomly select one.
}
 update $C=C\vee T_6$,
and recompute $P-U^T \circ C \circ V$.
Since updated $\delta_1= 0$, we discard attribute 1. (Step 5.)
We then get $\delta_7=4$, so pick attribute 7.
Since $\delta_3= 0$, we discard attribute 3.
Finally, we need to pick attribute 4.
For this particular example,
{\tt Pick-Largest} generates the same decomposition as {\tt Remove-Smallest} 
given in (\ref{eqn:results}).
\qed
\end{example}

\begin{comment}
Although computing $U^T \circ C \circ V$ is a conceptually neat way of finding $\{\delta_i\}$,
the time to compute the off-main diagonal elements is wasted.
Thus, we do not use it in {\tt Pick-Largest}.
\qed
\end{comment}

\section{Performance}\label{sec:performance}
\subsection{Complexity analysis}\label{sec:complexity}
The time complexity of both algorithms is dominated by the time to compute
matrix $V$ of (\ref{eqn:matrixV1}) and (\ref{eqn:matrixV2}),
respectively, in their Step 2.
By Proposition~\ref{prop:expand}, it can be expanded into $n$ (column vector,
row vector) pairs of sizes $m$ and $n$, respectively.
Then evaluating $\overline{M}^T \circ M$ takes time proportional to
\[
\sum_{i=1}^m ||\overline{M}^T[:,i]||\times ||M[i,:]|| \leq n \sum_{i=1}^m ||M[i,:]|| = n||M||.
\]
This implies that (\ref{eqn:matrixV1}) and (\ref{eqn:matrixV2}) can be evaluated in $O(n||M||)$ time.
Note that in terms of $\cal T$ defined in Step 3 of Algorithm~{\tt Remove-Smallest},
we have 
\[
||{\cal T}||_{l_1} = \sum_{i=1}^n||U[:,i]||\times ||V[:,i]||  \leq m \sum_{i=1}^n ||U[:,i]|| = m||M||,
\]
where $||{\cal T}||_{l_1}$ ($l_1$ norm) represents the sum of the elements of ${\cal T}$.
\begin{theorem}\label{thm:smallest}
Both Algorithms~{\tt Remove-Smallest} and\\ {\tt Pick-Largest} run in $O(m||M||)$ time.
\end{theorem}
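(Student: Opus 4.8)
The plan is to add up the costs of the three natural phases of each algorithm: (i)~building $J$ in Step~2; (ii)~the set-up before the main loop -- forming ${\cal T}=UV$ together with the sizes $\{\sigma_i\}$ in {\tt Remove-Smallest}, or initialising $C$ and the $\{\sigma_i\}$ in {\tt Pick-Largest}; and (iii)~the elimination (resp.\ selection) loop. Two facts established just before the statement carry most of the load: phase~(i) runs in $O(n||M||)$ time, and the total area of the $n$ candidate tiles satisfies $\sum_{i=1}^n\sigma_i=||{\cal T}||_{l_1}\le m||M||$, which is the quantity every per-tile step of phase~(iii) will be charged against. Using $||M||\ge n$ (no all-$0$ column) and the mild assumption that the data matrix has at least as many rows as columns ($m\ge n$; otherwise one transposes $M$ and solves the transposed instance, which carries a row-use condition in place of the column-use one), the phase-(i) cost $O(n||M||)$ and any $O(mn)$ or $O(n\log n)$ bookkeeping fall within $O(m||M||)$, so it suffices to bound phases~(ii) and~(iii) by $O(m||M||)$.

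For {\tt Remove-Smallest}: ${\cal T}=UV$ is the accumulation of $n$ rank-one products whose nonzeros total exactly $||{\cal T}||_{l_1}$, so Step~3 costs $O(||{\cal T}||_{l_1})=O(m||M||)$, and the sort in Step~4 costs $O(n\log n)$. In the loop each of the $n$ attributes is handled once; for attribute~$j$ one builds $T_j=U[:,j]\circ V[j,:]$, which has $\sigma_j$ ones, in $O(\sigma_j)$ time, and one then decides the test $||{\cal T}||=||{\cal T}-T_j||$ and, when it holds, performs ${\cal T}\leftarrow{\cal T}-T_j$ by visiting only the $\sigma_j$ positions where $T_j$ is $1$ and maintaining $||{\cal T}||$ incrementally -- again $O(\sigma_j)$. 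Hence the loop costs $\sum_{j}O(\sigma_j)=O(||{\cal T}||_{l_1})=O(m||M||)$, which settles this algorithm.

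For {\tt Pick-Largest} the loop needs an amortized argument, and this is the step I expect to be the crux. As written, Step~5 recomputes $\delta_i=\sigma_i-U[:,i]^T C V[:,i]$ for every surviving attribute in each of the up to $n$ rounds; done from scratch this is $\Theta(n\cdot||{\cal T}||_{l_1})$, a factor~$n$ too much. The remedy is to maintain the counts $c_i=||C\wedge T_i||$ incrementally: by Lemma~\ref{lem:prod2} the matrix expression for $\delta_i$ is precisely $\sigma_i-c_i$, the number of still-uncovered $1$'s of tile~$i$. While the tiles are built, record for each cell $(a,b)$ the list $L_{ab}=\{i:T_i[a,b]=1\}$; by the area identity $\sum_{a,b}|L_{ab}|=\sum_i\sigma_i=||{\cal T}||_{l_1}$. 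When Step~6 merges $T_j$ into $C$, every $1$-cell of $T_j$ is either already covered or gets covered for the first time; for a first-time-covered cell $(a,b)$ we walk $L_{ab}$ and, for each still-alive~$i$, bump $c_i$ and refresh $\delta_i$. Since no cell is first-covered twice, the total update work is at most $\sum_{a,b}|L_{ab}|=||{\cal T}||_{l_1}\le m||M||$; extracting the largest $\delta_i$ in each round adds $O(n^2)$ overall with a plain linear scan over the survivors, which is again within $O(m||M||)$ since $n^2\le n||M||\le m||M||$ (a priority queue would instead cost $O(||{\cal T}||_{l_1}\log n)$). Together with phases~(i) and~(ii) this gives the claimed $O(m||M||)$ bound; everything outside this amortized step is a routine tally of rank-one products against $||{\cal T}||_{l_1}\le m||M||$.
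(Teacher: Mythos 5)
Your proof is correct and follows essentially the same route as the paper's: both charge all work against $\|{\cal T}\|_{l_1}=\sum_i\sigma_i\le m\|M\|$, the paper asserting that every operation touches an element ${\cal T}[i,j]$ at most ${\cal T}[i,j]$ times and that Step~5 of {\tt Pick-Largest} ``can be implemented more efficiently'' by tracking covered $1$'s. Your version is in fact more complete than the paper's two-sentence argument: you make explicit the inverted-list amortization needed to maintain the $\delta_i$ (which the paper only gestures at) and the implicit assumption $m\ge n$ required to absorb the $O(n\|M\|)$ cost of computing $J$ into the stated $O(m\|M\|)$ bound.
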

\begin{proof}
We can consider that every operation in Algorithm~{\tt Remove-Smallest}
essentially accesses/modifies some element of ${\cal T}$ and the $(i,j)$ element
is accessed ${\cal T}[i,j]$ times.
Therefore, the total time is given by $O(||{\cal T}||_{l_1})$ $= O(m||M||)$.

As for Algorithm~ {\tt Pick-Largest},
although $\cal T$ is not used in it, imagine that it was defined.
We use $U[:,i]^TCV[:,i]$ to describe Step 5,
but it is used for only for the purpose of a concise description,
and this step can be implemented more efficiently without matrix multiplication.
All we need is a way to keep track of which 1 elements of $M$ has already been covered.
Therefore, the total time is still given by $O(||{\cal T}||_{l_1}) = O(m||M||)$,
as reasoned above.
\qed
\end{proof}

The above theorems imply that our algorithms run faster if the given matrix $M$ is sparse.
If we use a sophisticated  algorithm, 
matrix multiplication can be done in $O(m^{2.373})$ time,
assuming $m\geq n$~\cite{legall2014,williams2012}. 

We should mention that another important performance measure
for heuristic algorithms of the {\em approximation ratio} relative to the optimum.
We have not looked into this performance measure yet.

\subsection{Experiments on real datasets}\label{sec:experiments}
To evaluate the performance of our heuristic algorithms,
{\tt Pick-Largest} and {\tt Remove-Smallest},
we have tested them on several real datasets,
which have been used by other authors as benchmarks.
They are {\tt Mushroom}~\cite{lichman2013}, 
{\tt DBLP}\footnote{http://www.informatik.uni-trier.de/\textasciitilde ley/db/},
{\tt DNA}~\cite{myllykangas2006},
{\tt  Chess}~\cite{lichman2013}, and
{\tt Paleo}\footnote{http://www.helsinki.fi/science/now/}.
Table~II in the next page lists the results of our experiments and compares them with
{\tt Tiling}~\cite{geerts2004},
{\tt Asso}~\cite{miettinen2008b},
{\tt Hyper}~\cite{xiang2011},
and {\tt GreConD}~\cite{belohlavek2010}, and {\tt GreEss}~\cite{belohlavek2010}.
All but the last two columns of Table~II are from~\cite{belohlavek2013}.
The common dimension $k$ of the factor matrices,
generated by the exact BMD heuristics mentioned above are listed.
The numbers in bold face indicate the best value in each row.
The rows labeled 100\% shows the data for exact decomposition.
{\tt Asso} is not meant for exact BMD, as commented earlier.

\begin{table*}[htb]\label{tbl:coverage}
\centering
\begin{tabular}{|p{18mm}|r|rrrrrrr|}
\hline
				&Coverage&{\tt Tiling}&{\tt Asso}&{\tt Hyper}&{\tt GreConD}&{\tt GreEss}&{\tt Remove-Smallest}&{\tt Pick-Largest}\\
\hline\hline
{\tt Mushroom}~\cite{lichman2013}&50\% 	&7		&\bf{6}	&19		&7		&8		&37		&10\\
\cline{2-9}
(8,124$\times$119)	&75\% 	&\bf{24}	&36		&37		&\bf{24}	&26		&59		&27\\	
\cline{2-9}
				&100\%	&119		&N/A		&122		&120		&\bf{105}	&109		&109\\
\hline
{\tt DBLP}			&50\% 	&\bf{5}	&\bf{5}	&\bf{5}	&\bf{5}	&\bf{5}	&6		&6\\
\cline{2-9}
(6,980$\times$19)	&75\% 	&\bf{10}	&\bf{10}	&\bf{10}	&11		&\bf{10}	&11		&11\\	
\cline{2-9}
				&100\%	&21		&19		&\bf{19}	&20		&\bf{19}	&\bf{19}	&\bf{19}\\
\hline
{\tt DNA}~\cite{myllykangas2006}&50\% 	&32		&\bf{27}	&67		&33		&41		&67		&58\\
\cline{2-9}
(4,590$\times$392)	&75\% 	&94		&\bf{80}	&155		&96		&105		&155		&123\\	
\cline{2-9}
				&100\%	&489		&N/A		&392		&496		&408		&\bf{368}	&\bf{368}\\
\hline
{\tt  Chess}~\cite{lichman2013}&50\% 	&5		&\bf{2}	&26		&4		&6		&26		&12\\
\cline{2-9}
(3,196$\times$75)			&75\% 	&16		&\bf{15}	&39		&15		&17		&44		&26\\	
\cline{2-9}
						&100\%	&124		&N/A		&90		&124		&113		&\bf{72}	&\bf{72}\\
			
\hline
{\tt Paleo}			&50\% 	&39		&40		&\bf{38}	&39		&\bf{38}	&39		&39\\
\cline{2-9}
(501$\times$139)	&75\% 	&75		&76		&\bf{73}	&76		&\bf{73}	&75		&74\\		
\cline{2-9}
				&100\%	&151		&N/A		&\bf{139}	&152		&145		&\bf{139}	&\bf{139}\\
\hline
\end{tabular}
\medskip
\caption{Coverage comparison of BMD algorithms for five datasets.}
\end{table*}

Among the datasets we used,
{\tt Mushroom} consists of 8,124 objects and 23 ``nominal'' attributes.
If a ``nominal'' attribute $y$ takes $k>2$ values, $\{v_1, . . . , v_k\}$,
we expanded $y$, replacing it by $k $ Boolean attributes $\{y_{v_1}, \ldots , y_{v_k}\}$
in such a way that in each row $i$ the value of the column corresponding to $y_{v_j}$ is 1
if the value of the attribute $y$ in row $i$ in the original dataset is equal to $v_j$.

Note that only our algorithms impose the column use condition.
In spite of this restriction, {\tt Pick-Largest} achieves the smallest tiling size
(or dimension $k$) for exact coverage
for  four out of the five datasets in Table II,
which was somewhat unexpected.
Incidentally, we have found a decomposition without the column use condition with $k=101$
by some other means,
so none of the algorithms in Table~II can find the optimal decomposition for the {\tt Mushroom} dataset.
As can be seen from Table~II,
{\tt Pick-Largest} and {\tt Remove-Smallest} performed equally well in finding the
 exact decomposition.

Although our original intention was to design algorithms for exact BMD,
our algorithms can also be used for ``from-below''
approximation~\cite{belohlavek2010}.
In the ``from-below'' approximation, an important performance criterion is the {\em coverage}
defined as the number of 1's covered by the product $U\circ V$ over the total number of 1's
in the given $M$~\cite{geerts2004}.
The coverage is given in the second column of Table~II.
Each entry in the table is the number of tiles used,
which is the same as the common dimension $k$ of $U$ and $V$.
Fig.~1 
plots the coverage of Algorithm {\tt Pick-Largest} as a function of the number of attributes
contained in $U$ and $V$.
The attributes are arranged in the order they were picked.
\begin{figure}[hbt]\label{fig:picklargest}
\includegraphics[width=8.5cm]{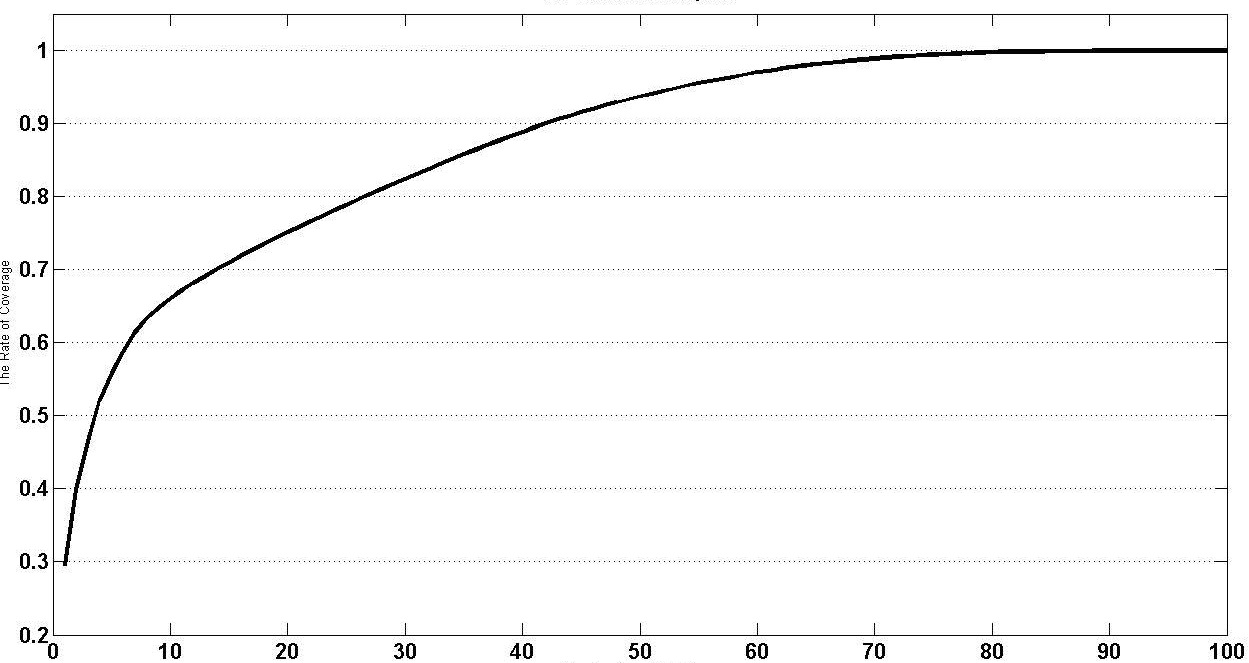}
\caption{Coverage of Algorithm {\tt Pick-Largest} for {\tt Mushroom}.}
\end{figure}

In most applications, high coverage, say, more than 90\%, would be of interest,
and we have collected coverage data in Table III 
 in this range for {\tt Pick-Largest} and {\tt Remove-Smallest},
but unfortunately not for the others,
since we haven't had the time to program the other algorithms.
We have some evidence to suggest that our algorithms perform better than others
especially at higher coverages.

\begin{table}[htb]\label{tbl:coverage1}
\centering
\begin{tabular}{|p{19mm}|c|ccp{3.5mm}c|}
\hline
				&Coverage	&{\tt Mushroom}	&{\tt DBLP}	&{\tt DNA}		&{\tt Paleo}\\
\hline\hline
				&90\% 		&76				&15			&243			&107\\
\cline{2-6}
{\tt Rem.-}{\tt Smallest}&95\% 		&85				&17			&292			&112\\	
\cline{2-6}
				&98\%		&100				&19			&332			&132	\\
\hline
				&90\% 		&47				&15			&197			&107\\
\cline{2-6}
{\tt Pick-Largest}	&95\% 		&62				&17			&242			&112\\	
\cline{2-6}
				&98\%		&81				&19			&285			&132	\\
\hline
\end{tabular}
\caption{Comparison of {\tt Remove-Smallest} and {\tt Pick-Largest} at high coverage ratios}
\end{table}

Another important aspect of performance is the efficiency of the algorithm
in terms of speed and memory use.
Table~IV 
shows the time it took for them to decompose $M$ (of {\tt Mushroom}) into $U$ and $V$
and the amount of memory used.
\begin{table}[h]
\begin{center}
\begin{tabular}{|c|c|c|c|c|}
\hline
		&{\tt GreConD}~\cite{belohlavek2013}		&{\tt GreEss}~\cite{belohlavek2013}	&{\tt Remove-S.}	&{\tt Pick-L.}\\
\hline\hline
Time 	&18min 5.7s		&12.47s		&7.39s			&10.71s\\
\hline
Memory 	&97MB			&2MB& 2.25MB				&1.72MB\\
\hline
\end{tabular}
\end{center}
\caption{Performance comparison}
\end{table}\label{tbl:preference1}
Belohlavek, et al.~\cite{belohlavek2010} carried out extensive tests of their algorithms {\tt GreConD} and {\tt GreEss},
which can be used for exact BMD, 
on {\tt Mushroom},
and measured the time and memory requirement.
Their data for exact BMD are given in Tables~IV. 
We should mention that the platforms we used to produce our results are different from theirs,
as shown in Table~V.
Probably it is safe to say that there is not a huge difference between the two.
Unfortunately, we do not have similar data for other algorithms,
since they are not published.

\begin{table}[h]\label{tbl:belohlavek}
\begin{center}
\small
\begin{tabular}{|c|p{2.5cm}|p{3.5cm}|}
\hline
		&Ours 					&Belohlavek et al.'s \cite{belohlavek2010}\\
\hline
CPU		&AMD Athlon X2 350 Dual Core Processor (3.5GHz)&INTEL Xcon 4 (3.2GHz)\\
\hline
Memory 	&4GB (1.6GHz)				&1GB\\
\hline
OS		&Windows 7 Professional		&Not mentioned in \cite{belohlavek2010}\\
\hline
Program	&{\tt Matlab} Version R2012b	&{\tt Matlab} (partially hand-coded in C)\\
\hline
\end{tabular}
\normalsize
\end{center}
\caption{Running platforms}
\end{table}

\section{Application to Educational Data Mining}\label{sec:application}
Educational data mining has been attracting increasing interest in recent years.
It aims to discover students' mastery of knowledge,
or skills which are itemized as {\em attributes}.
In the widely studied {\em Rule Space Model}~\cite{tatsuoka2009} in cognitive assessment in education,
a Boolean matrix, named the {\em Q-matrix}, is used to represent hypothetical sets of attributes
which would be needed to answer the test items correctly.
To explain the relevance of exact BMD to the educational
{\em Q-matrix theory} developed by Tatsuoka~\cite{tatsuoka2002},
let us introduce new symbols for matrices.

\textbf{Attribute or skill set:} 
We assume that the students' {\em knowledge} can be represented by the {\em knowledge state matrix}
$A= [a_{ij}] \in \{0,1\}^{m\times k}$,
where  $a_{ij}=1$ (resp. $a_{ij}=0$) indicates that the $i^{th}$ student possesses
 (resp. does not possess) knowledge represented by the $j^{th}$ attribute.
For $i=1,2,\ldots,m$, the {\em knowledge state} of student $i$ is represented by a row vector
\[
\bm{a}_i=[a_{i1},a_{i2},\ldots,a_{ik}]. 
\]

\textbf{$Q$-matrix}: 
It is denoted by  $Q= [q_{ij}]\in \{0,1\}^{n\times k}$,
where  $q_{ij}=1$ (resp. $q_{ij}=0$) indicates that answering test item $i$
correctly requires (resp. does not require) knowing or understanding attribute (={\em concept}) $j$.
Define a row vector by
\[
\bm{q}_i=[q_{i1},q_{i2},\ldots,q_{ik}].
\]

\textbf{Response matrix}: 
Given $m$ students and $n$ test items,
the test results can be represented by a matrix $R \in \{0,1\}^{m \times n}$,
where $R[i,j]=1$ (resp. $R[i,j]=0$) indicates that the $i^{th}$ student solved
the $j^{th}$ test item correctly (resp. incorrectly).
Theoretically, 
student $i$ should be able to answer question $j$ if $\bm{a}_i \geq \bm{q}_j$
or $\overline{\bm{a}_i}\circ\bm{q}_j =0$.
We thus define the {\em ideal item response} $R[i,j]$ by
\begin{eqnarray}
R[i,j]=\begin{cases}
1 & \bm{a}_i\geq \bm{q}_j\\
0 &otherwise
\end{cases}
\end{eqnarray}

If both $Q$ and $A$ were known,
then the students' test performance,
called the {\em ideal item response pattern}~\cite{tatsuoka2009},
could be theoretically predicted.
The following result was announced in \cite{sun2014} without
proof.
Here we provide a simple but formal proof.

\begin{theorem}\label{thm:MUV}
The ideal item response matrix $R$, the knowledge state matrix $A$
and the Q-matrix $Q$ are related as follows:
\begin{equation}
R=\overline{\overline{A}\circ Q^T}.
\end{equation}
\end{theorem}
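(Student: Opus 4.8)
The plan is to prove the identity entrywise, reducing it to the domination criterion of Proposition~\ref{prop:dominates}. First I would unwind the Boolean product on the right-hand side: for each $i\in\{1,\dots,m\}$ and $j\in\{1,\dots,n\}$, the $(i,j)$ entry of $\overline{A}\circ Q^T$ is $\bigvee_{t=1}^{k}\bigl(\overline{a}_{it}\wedge q_{jt}\bigr)$, which is exactly the $i^{th}$ row of $\overline{A}$ Boolean-multiplied by the $j^{th}$ column of $Q^T$; that is, $(\overline{A}\circ Q^T)[i,j]=\overline{\bm{a}_i}\circ\bm{q}_j^T$, where $\bm{a}_i$ and $\bm{q}_j$ are the length-$k$ row vectors introduced above.

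Next I would apply Proposition~\ref{prop:dominates} with $\bm{p}=\bm{a}_i$ and $\bm{q}=\bm{q}_j$, obtaining $\bm{a}_i\geq\bm{q}_j\Leftrightarrow\overline{\bm{a}_i}\circ\bm{q}_j^T=0\Leftrightarrow\overline{\overline{\bm{a}_i}\circ\bm{q}_j^T}=1$. Combining this with the previous step yields $\overline{\overline{A}\circ Q^T}[i,j]=1$ if and only if $\bm{a}_i\geq\bm{q}_j$. By the definition of the ideal item response, $R[i,j]=1$ precisely when $\bm{a}_i\geq\bm{q}_j$, and $R[i,j]=0$ otherwise; since both matrices are Boolean, agreement of every entry gives $R=\overline{\overline{A}\circ Q^T}$.

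I do not expect any genuine obstacle here; the only thing to watch is the transpose/index bookkeeping, i.e.\ confirming that the $(i,j)$ entry of $\overline{A}\circ Q^T$ is really the Boolean product of the $i^{th}$ row of $\overline{A}$ with the $j^{th}$ column of $Q^T$, so that the row-vector form of Proposition~\ref{prop:dominates} applies verbatim. Once that identification is in place the conclusion is immediate. Alternatively, one could phrase the argument at the matrix level by applying the matrix form of the domination statement directly rather than going entry by entry, but the entrywise route is the most transparent.
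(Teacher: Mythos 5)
Your proposal is correct and follows essentially the same route as the paper: both arguments reduce the claim to an entrywise application of Proposition~\ref{prop:dominates}, identifying the $(i,j)$ entry of $\overline{\overline{A}\circ Q^T}$ with $\overline{\overline{\bm{a}_i}\circ\bm{q}_j^T}$ and noting this is $1$ exactly when $\bm{a}_i\geq\bm{q}_j$, i.e.\ when $R[i,j]=1$. Your version is slightly more explicit about the transpose/index bookkeeping, but the substance is identical.
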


\begin{proof}
The fact that student $i$ has enough knowledge to answer question $j$
can be represented by $\bm{a}_i \geq \bm{q}_j$, which is equivalent to
$\overline{\bm{a}}_i \circ\bm{q}_j^T =0$ hence $\overline{\overline{\bm{a}_i}\circ\bm{q}_j^T} =1$
by Proposition~\ref{prop:dominates}.
If he/she doesn't,
i.e., $\bm{a}_i \not\geq \bm{q}_j$,
on the other hand,
then $\overline{\bm{a}}_i \circ\bm{q}_j^T=1$, 
and $\overline{\overline{\bm{a}_i}\circ\bm{q}_j^T} =0$.
\end{proof}

If $R$ is given but the underlying matrices $Q$ and $A$ are unknown,
we want to mine $Q$ and $A$ out of $R$.
Thanks to Theorem~\ref{thm:MUV},
by finding decomposition $\overline{R}=\overline{A}\circ Q^T$,
we can learn students' knowledge state matrix $A$ and the Q-matrix $Q$
from the test responses in $R$.
We simply set $M=\overline{R}$, $U=\overline{A}$, and $V=Q^T$,
and decompose $M$.
Thus the Q-matrix learning problem can be transformed into {\em exact}
(i.e., not approximate) Boolean matrix decomposition problem.
Here we assume that $R$ has no ``noise,'' namely it correctly represents the students'
knowledge, and mine $Q$ and $A$ from it.
Clearly, the set of collected test responses, $\cal R$, is likely to be ``noisy,'' 
because students may be able to guess correct answers by luck,
or may make silly mistakes (called ``slips''~\cite{tatsuoka2009}). 
Therefore, the discovered factors $A'$ and $Q'$ of $\cal R$ are just approximations to the true $A$ and $Q$.
This problem is a main issue in {\em Rule space}
model~\cite{liu2012,sun2014,tatsuoka2009,tatsuoka2002,zhang2013},
but is beyond the scope of this paper.

\begin{example}
Here we use the dataset of Example 3.9 in \cite{tatsuoka2009}.
Table~VI 
shows the ideal item response pattern matrix $R$ for $m=12$ students and $n=11$ test items,
\begin{table}[htb]\label{tbl:tatsuoka1}
 \centering
 \begin{tabular}{c|ccccccccccrr}
                &1 &2 &3 &4 &5 &6 &7 &8 &9 &10 &\hspace{-3mm}11\\
  \midrule
1     &1     &1     &1     &1     &1     &1     &1     &1     &1     &1     &1\\
2     &1     &1     &1     &0     &1     &0     &1     &0     &0     &0     &0\\
3     &1     &1     &0     &1     &0     &1     &0     &1     &0     &0     &0\\
4     &1     &1     &0     &0     &0     &0     &0     &0     &0     &0     &0\\
5     &1     &0     &1     &1     &1     &1     &0     &0     &1     &1     &0\\
6     &1     &0     &1     &0     &1     &0     &0     &0     &0     &0     &0\\
7     &1     &0     &0     &1     &0     &1     &0     &0     &0     &0     &0\\
8     &1     &0     &0     &0     &0     &0     &0     &0     &0     &0     &0\\
9     &0     &0     &1     &1     &0     &0     &0     &0     &1     &0     &0\\
10     &0     &0     &1     &0     &0     &0     &0     &0     &0     &0     &0\\
11     &0     &0     &1     &0     &0     &0     &0     &0     &0     &0     &0\\
12     &0     &0     &0     &0     &0     &0     &0     &0     &0     &0     &0\\
\end{tabular}
\caption{The ideal item response matrix $R^{12\times 11}$ \cite{tatsuoka2009}.}
\end{table}
while Table~VII shows the matrices $A$ and $Q$ (each with $k=4$ attributes).
In \cite{tatsuoka2009}, they constructed $R$ from the given $A$ and $Q$.
Here, taking $R$ as the input, Algorithms {\tt Remove-Smallest} and {\tt Pick-Largest}
were able to recover $A$ and $Q$.
\begin{table}[htb]\label{tbl:tatsuoka2}
 \centering
$$
A=\left(
\begin{array}{llll}
1     &1     &1     &1\\
1     &1     &1     &0\\
1     &1     &0     &1\\
1     &1     &0     &0\\
1     &0     &1     &1\\
1     &0     &1     &0\\
1     &0     &0     &1\\
1     &0     &0     &0\\
0     &0^*  &1     &1\\
0     &0^*  &1     &0\\
0     &0^*  &1     &0\\
0     &0^*  &0     &0
\end{array}
\right);~~
Q=\left(
\begin{array}{cccc}
1     &0     &0     &0\\
1     &1     &0     &0\\
0     &0     &1     &0\\
0     &0     &0     &1\\
1     &0     &1     &0\\
1     &0     &0     &1\\
1     &1     &1     &0\\
1     &1     &0     &1\\
0     &0     &1     &1\\
1     &0     &1     &1\\
1     &1     &1     &1
\end{array}
\right)
$$
\caption{Knowledge state matrix $A^{12\times 4}$ and Q-matrix $Q^{11\times 4}$.}
\end{table}

\begin{comment}
In the above example,
note that $Q[:,1]$ dominates column $Q[:,2]$.
This means that any test item that tests concept 2 automatically
tests concept 1,
in other words, attribute 1 is a prerequisite for concept 2~{\rm \cite{tatsuoka2009}}.
Students 9 to 12 have not mastered concept 1,
which are tested in test items 1,2, 5$\sim$8, and 10$\sim$11.
Thus $R[s,1]=0$ (they cannot answer test items testing concept 1)
for $s=$ 9$\sim$12.
As for any test items that has a 0 in both columns 1 and 2 of $Q$,
i.e., $Q[3,:]$, $Q[4,:]$, and $Q[9,:]$,
students 9$\sim$12 (who haven't mastered concept 1)
cannot pass test items testing concept 2.
Therefore, $A[s,2]=0$ for $s=9\sim$12.
However, mathematically, setting $A[s,2]=1$ for $s=9\sim$12
still satisfies $\overline{R}=\overline{A}\circ Q^T$.
See the entries $0^*$ in matrix $A$ in Table~VII.
\qed
\end{comment}

In general, we can prove the following.
\begin{lemma}\label{lem:asterisk}
Suppose that $Q[:,i]$ dominates column $Q[:,j]$.
Then $[A[s,i]=0] \Rightarrow [A[s,j]=0]$. 
\qed
\end{lemma}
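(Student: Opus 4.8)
The plan is to reduce the statement to monotonicity of the Boolean product, once we pin down \emph{which} matrix $A$ is meant: for a completely arbitrary knowledge state matrix the implication can fail --- nothing in the definition prevents a student from being credited with attribute $j$ but not attribute $i$ --- so the lemma has to be read as a statement about the knowledge state matrix \emph{as reconstructed from the response data}. The natural (minimal) reconstruction credits student $s$ with attribute $t$ exactly when $s$ solved at least one test item requiring $t$; that is, $\bm{a}_s=\bigvee_{\,r\,:\,R[s,r]=1}\bm{q}_r$, which in matrix form reads $A=R\circ Q$. I would open the proof by recalling (and, if needed, re-deriving) this identity: applying Lemma~\ref{lem:Jmatrix} to the transposed relation $\overline R^{\,T}=Q\circ\overline A^{\,T}$ --- i.e. with $M:=\overline R^{\,T}$, $U:=Q$, $V:=\overline A$ --- gives $\overline A\le\overline{\overline M^{\,T}\circ U}=\overline{R\circ Q}$, and part (b) of the same lemma shows $\overline{R\circ Q}$ is itself consistent; hence the reconstructed (smallest) $A$ equals $R\circ Q$.

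With this in hand the argument is one line. By the definition of column-vector dominance, ``$Q[:,i]$ dominates $Q[:,j]$'' means $q_{ri}\ge q_{rj}$ for every test item $r$, i.e. $Q[:,i]\ge Q[:,j]$ as column vectors. Monotonicity of the Boolean product in its right argument (immediate from~(\ref{eqn:H})) then yields
\begin{equation}
A[:,i]\;=\;(R\circ Q)[:,i]\;\ge\;(R\circ Q)[:,j]\;=\;A[:,j],\nonumber
\end{equation}
and comparing the $s$-th entries gives $a_{si}\ge a_{sj}$; in particular $a_{si}=0$ forces $a_{sj}=0$, which is the assertion.

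It is worth recording the pedagogical content of this computation, since that really is ``the proof.'' If $A[s,j]=1$, then the reconstructed $A$ witnesses a test item $r$ that $s$ solved and that requires attribute $j$ (so $R[s,r]=q_{rj}=1$); because column $i$ of $Q$ dominates column $j$, the same item $r$ also requires attribute $i$ (so $q_{ri}=1$); and having solved $r$, student $s$ must possess attribute $i$, i.e. $A[s,i]=1$ --- the contrapositive of the lemma. This is precisely the phenomenon behind the $0^{*}$ entries in the knowledge state matrix above: attribute~$1$ is a prerequisite of attribute~$2$ ($Q[:,1]\ge Q[:,2]$), so the students who lack attribute~$1$ cannot be credited with attribute~$2$.

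The only genuine obstacle is the bookkeeping in the first step --- making precise that ``$A$'' denotes the reconstructed knowledge state matrix $R\circ Q$ (equivalently, the one arising from the maximal admissible $\overline A$) rather than an arbitrary $A$, and verifying that $R\circ Q$ is consistent, i.e. that it reproduces $\overline R=\overline{R\circ Q}\circ Q^{T}$. Both are routine consequences of Lemma~\ref{lem:Jmatrix} applied to the transpose, as indicated above; once they are in place the lemma follows purely from monotonicity, which is presumably why the paper states it without a proof.
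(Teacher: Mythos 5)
Your proof is correct, but note that the paper offers no proof at all to compare it with: Lemma~\ref{lem:asterisk} is stated with an immediate \qed, preceded only by the informal discussion of the $0^{*}$ entries. Your argument therefore supplies something the paper omits, and the two delicate points you isolate are exactly the right ones. First, the interpretive step is genuinely necessary: for an arbitrary admissible $A$ the implication fails, as the paper's own Comment concedes (``setting $A[s,2]=1$ for $s=9\sim12$ still satisfies $\overline{R}=\overline{A}\circ Q^{T}$''), so the lemma can only be about the minimal reconstruction. Second, your identification $A=R\circ Q$ via Lemma~\ref{lem:Jmatrix} applied to $\overline{R}^{T}=Q\circ\overline{A}^{T}$ is sound ($\overline{A}\le\overline{R\circ Q}$ and $\overline{R}=\overline{R\circ Q}\circ Q^{T}$ both check out), and it is consistent with what the algorithms actually output: there $A[:,t]=R[:,h_t]$ and $Q[:,t]$ is the corresponding column of $J$ computed from $\overline{R}$, and one verifies $(R\circ Q)[:,t]=\bigvee_{r:\,Q[r,t]=1}R[:,r]=R[:,h_t]$, so your $A$ coincides with theirs. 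Given that, the monotonicity step $Q[:,i]\ge Q[:,j]\Rightarrow (R\circ Q)[:,i]\ge(R\circ Q)[:,j]$ is immediate and gives the claim. A marginally shorter route tailored to the algorithms' output is also available: $Q[:,i]\ge Q[:,j]$ evaluated at row $h_j$ of $J$ (where $J[h_j,h_j]=1$) forces $J[h_j,h_i]=1$, i.e.\ $\overline{R}[:,h_j]\ge\overline{R}[:,h_i]$ by Lemma~\ref{lem:JM}(a), hence $A[:,j]=R[:,h_j]\le R[:,h_i]=A[:,i]$; but your version is more general since it covers the minimal $A$ for any admissible $Q$, not just the one the heuristics happen to select.
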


The input to our algorithms is just $\overline{R}$,
and the complemented knowledge state matrix $\overline{A}$ is an output.
Algorithm {\tt Pick-Largest} computes the values of $\delta_i$ in each round,
whose maxima are shown in Table~VIII.  

\begin{table}[h]\label{tbl:preference}
\begin{center}
\begin{tabular}{|c||c|c|c|c|c|}
\hline
Round	&1	&2	&3	&4	&5\\
\hline
$\max_i \{\delta_i\}$	&32	&30	&19	&9	&0\\
\hline
argmax$_i \{\delta_i\}$&1	&4	&3	&2	&$5\sim 11$\\
\hline
\end{tabular}
\end{center}
\medskip
\caption{The attribute picked in each round of {\tt Pick-Largest}.}
\end{table}

Algorithm {\tt Remove-Smallest} removes attributes in the increasing order of $\sigma_i$,
provided the product remains the same, i.e., $\overline{R}$.
For this example,
both algorithms decompose $\overline{R}$ into factor matrices with the common dimension ($k=4$),
which equals the dimension of the original factors \cite{tatsuoka2009}.
\qed

\end{example}

\section{Conclusions and Discussions}\label{sec:conclusion}
Given any Boolean matrix $M$, 
we first proved that $J=\overline{\overline{M}^T\circ M} \in \{0,1\}^{n\times n}$ is the ``maximal''
matrix satisfying $M=M\circ J^T$,
in the sense that if any 0 element in $J$ is changed to a 1
then this equality no longer holds.
Based on this formula,
we then presented two heuristic algorithms to find an exact decomposition $M=U\circ V$
such that $U$ consists of a ``small'' subset of the columns of $M$.
{\em Exact} BMD is aesthetically pleasing and intellectually satisfying, 
and we believe that it will find useful applications in the future.
In the present day data mining applications, however,
it may not be necessary or very important.

So we also showed that  our algorithms can be used for approximate BMD,
namely to find a product $U\circ V$ that covers most of the 1's and no 0's in $M$.
This is sometimes called ``from-below'' approximation~\cite{belohlavek2013}.
We ran our algorithms on several real examples,
which are often used as benchmarks.
On these particular datasets, our algorithms perform rather well,
compared with the known algorithms proposed in~\cite{belohlavek2013,belohlavek2010,xiang2011,geerts2004},
despite the column use condition that we impose, but the others do not.
Clearly, more extensive tests are called for to arrive at any definite conclusions.

Although we have concentrated on the elimination of column dominance,
it is possible that a given matrix $M$ has more row dominance than column dominance.
In any case, it would be worthwhile to apply our algorithms to both $M$ and $M^T$,
and pick the result with the smaller factor matrix dimensions.
There may be situations where a decomposition of $M=A\circ B$ is already known,
but it is desired to reduce the number of attributes (columns) in $A$.
In such a case, we can apply our algorithms to decompose $A$ as $A=U\circ V$.
We then have $M=U\circ (V\circ B)$ such that $U$ consists of a subset of the columns of $A$.

We are working on a promising BMD algorithm without column use condition,
which is founded on some mathematical formulae proved in this paper.
For dataset {\tt Mushroom}, it achieves 100\% coverage for $k= 101$,
which is better than any algorithm in Table~II, as for as exact BMD is concerned.

As a final remark, from Proposition~\ref{prop:expand} 
there is a lot of parallelism in matrix product computation.
This implies that if the given matrix $M$ is very large,
our algorithms are amenable to the {\em map-reduce} technique~\cite{rajaraman2014}.

Finally, as mentioned before, we have not examined the {\em approximation ratio} of
our heuristic algorithms relative to the optimum.
We leave it as future work.

\section*{Acknowledgement}
We thank Martin Trnecka of Palack\'{y} University, Czech Republic,
 for providing us with the testbed datasets tailored as inputs to {\tt Matlab}.


\begin{thebibliography}{10}
\providecommand{\url}[1]{\texttt{#1}}
\providecommand{\urlprefix}{URL }

\bibitem{amilhastre1998}
Amilhastre, J., Vilarem, M., Janssen, P.: Complexity of minimum biclique cover
  and minimum biclique decomposition for bipartite domino-free graphs. Discrete
  Applied Math.  86,  125--144 (1998)

\bibitem{barnes2010}
Barnes, T.: Novel derivation and application of skill matrices: The {q}-matrix
  method. In: Romero, C., Ventura, S., Pechenizkiy, M., Baker, R. (eds.) Chap.
  11, Handbook on Educational Data Mining, pp. 159--172. CRC Press, Florida
  (2010)

\bibitem{belohlavek2013}
Belohlavek, R., Trnecka, M.: From-below approximations in boolean matrix
  factorization: geometry, heuristics, and new {BMF} algorithm.
  arXiv:1306.4905v1 [cs.NA] pp. 1--38 (June 2013)

\bibitem{belohlavek2010}
Belohlavek, R., Vychodil, V.: Discovery of optimal factors in binary data via a
  novel method of matrix decomposition. J. of Computer and System Sciences
  76(1),  3--20 (Feb 2010)

\bibitem{berry2007}
Berry, M., Browne, M., Langville, A., Pauca, V., Plemmons, R.: Algorithms and
  applications for approximate nonnegative matrix factorization. Computational
  Statistics \& Data Analysis  52(1),  155--173 (2007)

\bibitem{doherty1999}
Doherty, F., Lundgren, J., Siewert, D.: Biclique covers and partitions of
  bipartite graphs and digraphs and related matrix ranks of {0, 1}-matrices.
  Congressus Numerantium  136(2),  73--96 (1999)

\bibitem{drineas2006}
Drineas, P., Kannan, R., Mahoney, M.: Fast monte carlo algorithms for matrices
  iii: computing a compressed approximate matrix decomposition. SIAM J.
  Computing  36(1),  184--206 (2006)

\bibitem{drineas2008}
Drineas, P., Mahoney, M., Muthukrishnan, S.: Relative-error {CUR} matrix
  decompositions. SIAM J. Matrix Analysis and Applications  30(2),  844--881
  (2008)

\bibitem{feige1998}
Feige, U.: A threshold of ln $n$ for approximating set cover. J. ACM  45(4),
  634--652 (1998)

\bibitem{franzblau1984}
Franzblau, D., Kleitman, D.: An algorithm for covering polygons with
  rectangles. Inform. Control  63,  164--189 (1984)

\bibitem{geerts2004}
Geerts, F., Goethals, B., Mielik\"{a}inen, T.: Tiling databases. In: Discovery
  Science. pp. 278--289. No. 3245 in LNCS, Springer-Verlag (2004),
  \url{http://www.springerlink.com/content/31ahky75yecgtwlu}

\bibitem{golub1996}
Golub, G., Van~Loan, C.: Matrix {C}omputations. Johns Hopkins Univ. Press,
  Baltimore (1996)

\bibitem{gregory1983}
Gregory, D., Pullman, N.: Semiring rank: Boolean rank and nonnegative rank
  factorizations. J. Combin. Inform. System Sci.  8,  223--233 (1983)

\bibitem{hochbaum1998}
Hochbaum, D.: Approximating clique and biclique problems. J. of Algorithms
  29(1),  174--200 (1998)

\bibitem{hyvonen2008}
Hyv\"{o}nen, S., Miettinen, P., Terzi, E.: Interpretable nonnegative matrix
  decompositions. In: Proc. 14th ACM Int'l Conf. on Knowl. Discovery \& Data
  Mining (KDD). pp. 345--353 (2008)

\bibitem{kim1982}
Kim, K.: Boolean {M}atrix {T}heory and {A}pplications. New York: M. Dekker
  (1982)

\bibitem{koedinger2012}
Koedinger, K., McLaughlin, E., Stamper, J.: Automated student model
  improvement. In: Proc. 5th Int'l Conf. on Educational Data Mining (2012)

\bibitem{kushilevitz1996}
Kushilevitz, E., Nisan, N.: Communication Complexity. Cambridge University
  Press, New York (1996)

\bibitem{lan2007}
Lan, G., DePuy, G., Whitehouse, G.: An effective and simple heuristic for the
  set covering problem. Euro. J. of Oper. Res.  176,  1387--1403 (2007)

\bibitem{legall2014}
Le~Gall, F.: Powers of tensors and fast matrix multiplication. In: Proc. 39th
  Int'l Symp. on Symbolic and Algebraic Computation (ISSAC) (2014)

\bibitem{lee1999}
Lee, D., Seung, H.: Learning the parts of objects by non-negative matrix
  factorization. Nature  401,  788--791 (1999)

\bibitem{lee2001}
Lee, D., Seung, H.: Algorithms for non-negative matrix factorization. Advances
  in Neural Information Processing Systems  13,  556--562 (2001)

\bibitem{lichman2013}
Lichman, M.: {UCI} {M}achine {L}earning {R}epository. Tech. rep., School of
  Info. and CS, University of California, Irvine, CA (2013), \url{url= {\tt
  http://www.ics.uci.edu/ml}}

\bibitem{liu2012}
Liu, J., Xu, G., Ying, Z.: Data-driven learning of {q}-matrix. Applied
  Psychological Measurement  36(7),  548--564 (2012)

\bibitem{lubiw1990}
Lubiw, A.: The boolean basis problem and how to cover some polygons by
  rectangles. SIAM J. Discrete Math.  3(1),  98--115 (1990)

\bibitem{lubiw1991}
Lubiw, A.: A weighted min-max relation for intervals. J. Combin. Theory  53(2),
   151--172 (1991)

\bibitem{miettinen2008a}
Miettinen, P.: The boolean column and column-row matrix decompositions. Data
  Mining and Knowledge Discovery  17,  39--56 (2008)

\bibitem{miettinen2009}
Miettinen, P.: Matrix Decomposition Methods for Data Mining: Computational
  Complexity and Algorithms. Ph.D. thesis, University of Helsinki, Helsinki
  (2009)

\bibitem{miettinen2012}
Miettinen, P.: On finding joint subspace boolean matrix factorizations. In:
  Proc. 12th SIAM Int'l Conf. on Data Mining (SDM). pp. 954--965 (2012)

\bibitem{miettinen2008b}
Miettinen, P., Mielik\"{a}inen, T., Gionis, A., Das, G., Mannila, H.: The
  discrete basis problem. IEEE Trans. on Knowl. and Data Eng.  20(10),
  1348--1362 (2008)

\bibitem{monson1995}
Monson, S., Pullman, S., Rees, R.: A survey of clique and biclique coverings
  and factorizations of (0,1)-matrices. Bulletin of Inst. Combinatorics and Its
  Applications  14,  17--86 (1995)

\bibitem{mueller1996}
M\"{u}ller, H.: On edge perfectness and classes of bipartite graphs. Discrete
  Math.  149,  159--187 (1996)

\bibitem{myllykangas2006}
Myllykangas, S., Himberg, J., Böhling, T., Nagy, B., Hollm\'{e}n, J., Knuutila,
  S.: {DNA} copy number amplification profiling of human neoplasms. Oncogene
  25(55),  7324--7332 (2006)

\bibitem{nau1978}
Nau, D., Markowsky, G., Woodbury, M., Amos, D.: A mathematical analysis of
  human leukocyte antigen serology. Math. Bioscience  40,  243--270 (1978)

\bibitem{orlin1977}
Orlin, J.: Contentment in graph theory: covering graphs with cliques.
  Indagationes Mathematicae  80(5),  406--424 (1977)

\bibitem{rajaraman2014}
Rajaraman, A., Leskovec, J., Ullman, J.: Mining of {M}assive {D}atasets, 2nd
  {E}d. Cambridge University Press, New York (2014)

\bibitem{schmidt2011}
Schmidt, G.: Relational Mathematics. Cambridge University Press, New York
  (2011)

\bibitem{streich2009}
Streich, A., Frank, M., Basin, D., Buhmann, J.: Multi-assignment clustering for
  boolean data. In: Proc. Int'l Conf. on Machine Learning (ICML). pp. 969--976
  (2009)

\bibitem{sun2014}
Sun, Y., Ye, S., Inoue, S., Sun, Y.: Alternating recursive method for
  {Q}-matrix learning. In: Proc. 7th Int'l Conf. on Educational Data Mining
  (EDM). pp. 14--20 (2014)

\bibitem{tatsuoka2002}
Tatsuoka, C.: Data-analytic methods for latent partially ordered classification
  models. Applied Statistics (JRSS-C)  51,  337--350 (2002)

\bibitem{tatsuoka2009}
Tatsuoka, K.: Cognitive {A}ssessment: An Introduction to the rule space method.
  Routledge, New York (2009)

\bibitem{umetani2007}
Umetani, S., Yagiura, M.: Relaxation heuristic for the set covering problem. J.
  Oper. Res. Soc. of Japan  50(4),  350--375 (2007)

\bibitem{vavasis2010}
Vavasis, S.: On the complexity of nonnegative matrix factorization. SIAM J.
  Optimization  20,  1364--1377 (2010)

\bibitem{williams2012}
Williams, V.: Multiplying matrices faster than {C}oppersmith-{W}inograd. In:
  Proc. 44-th ACM Symp. Theory of Computing. pp. 887--898 (2012)

\bibitem{xiang2011}
Xiang, Y., Jin, R., Fuhry, D., Dragan, F.: Summarizing transactional databases
  with overlapped hyperrectangles. Data Mining and Knowledge Discovery  23,
  215--251 (2011)

\bibitem{zhang2013}
Zhang, S., DeCarlo, L., Ying, Z.: Non-identifiability, equivalence classes, and
  attribute-specific classification in {Q}-matrix based cognitive diagnosis
  models. Tech. rep., Columbia University (2013)

\end{thebibliography}

\end{document}